\newtheorem{theorem}{Theorem}[section]
\newtheorem{lemma}[theorem]{Lemma}
\newcommand{\diam}{\mathsf{diam}}
\newcommand{\eps}{\varepsilon}
\newcommand{\ball}{\mathsf{ball}}
\newcommand{\capa}{\mathsf{cap}}
\newcommand{\demand}{\mathsf{dem}}
\newcommand{\SAT}{\mathrm{SAT}}
\newcommand{\density}{\mathbf{d}}
\title{Algorithms for Metric Learning via Contrastive Embeddings}
\author{
	Diego Ihara Centurion\thanks{University of Illinois at Chicago, Dept.~of Computer Science, Chicago, IL 60607. E-mail \texttt{\{dihara2,nmoham24,sidiropo\}@uic.edu}. Supported by NSF under award CAREER 1453472, and grants CCF 1815145 and CCF 1423230.}
	\and
	Neshat Mohammadi\footnotemark[1]
	\and
	Anastasios Sidiropoulos\footnotemark[1]
}
\begin{document}

\maketitle

\begin{abstract}
We study the problem of supervised learning a metric space under \emph{discriminative} constraints.
Given a universe $X$ and sets ${\cal S}, {\cal D}\subset {X \choose 2}$ of \emph{similar} and \emph{dissimilar} pairs, we seek to find a mapping $f:X\to Y$, into some target metric space $M=(Y,\rho)$, such that similar objects are mapped to points at distance at most $u$, and dissimilar objects are mapped to points at distance at least $\ell$.
More generally, the goal is to find a mapping of maximum \emph{accuracy} (that is, fraction of correctly classified pairs).
We propose approximation algorithms for various versions of this problem, for the cases of Euclidean and tree metric spaces.
For both of these target spaces, we obtain fully polynomial-time approximation schemes (FPTAS) for the case of perfect information.
In the presence of imperfect information we present approximation algorithms that run in quasi-polynomial time (QPTAS).
We also present an exact algorithm for learning line metric spaces with perfect information in polynomial time.
Our algorithms use a combination of tools from metric embeddings and graph partitioning, that could be of independent interest.
\end{abstract}

\section{Introduction}

Geometric algorithms have given rise to a plethora of tools for data analysis, such as clustering, dimensionality reduction, nearest-neighbor search, and so on; we refer the reader to \cite{DBLP:reference/cg/IndykMS17, har2011geometric, matouvsek2002lectures} for an  exposition.
A common aspect of these methods is that the underlying data is interpreted as a metric space.
That is, each object in the input is treated as a point, and the pairwise dissimilarity between pairs of objects is encoded by a distance function on pairs of points.
An important element that can critically determine the success of this data analytic framework, is the choice of the actual metric.
Broadly speaking, the area of \emph{metric learning} is concerned with methods for recovering an underlying metric space that agrees with a given set of observations (we refer the reader to \cite{shakhnarovich2005learning,kulis2013metric} for a detailed exposition).
The problem of learning the distance function is cast as an optimization problem, where the objective function quantifies the extend to which the solution satisfies the input constraints.

\textbf{Supervised vs.~unsupervised metric learning.}
The problems studied in the context of metric learning generally fall within two main categories: 
\emph{supervised} and \emph{unsupervised} learning.
In the case of unsupervised metric learning, the goal is to discover the intrinsic geometry of the data, or to fit the input into some metric space with additional structure.
A prototypical example of an unsupervised metric learning problem is dimensionality reduction, where one is given a high-dimensional point set and the goal is to find a mapping into some space of lower dimension, or with a special structure, that approximately preserves the geometry of the input (see e.g.~\cite{dasgupta2003elementary, matouvsek2010inapproximability,badoiu2007approximation}).

In contrast, the input in a supervised metric learning problem includes \emph{label constrains}, which encode some prior knowledge about the ground truth.
As an example, consider the case of a data set where experts have labeled some pairs of points as being ``similar'' and some pairs as being ``dissimilar''.
Then, a metric learning task is to find a transformation of the input such that similar points end up close together, while dissimilar points end up far away from each other.
As an illustrative example of the above definition, consider the problem of recognizing a face from a photo.
More concretely, a typical input consists of some \emph{universe} $X$, which is a set of photos of faces, together with some ${\cal S}, {\cal D}\subseteq \binom{X}{2}$.
The set ${\cal S}$ consists of pairs of photos that correspond to the same person, while ${\cal D}$ consists of pairs of photos from different people.
A typical approach for addressing this problem (see e.g.~\cite{chopra2005learning}) is to find a mapping $f:X\to \mathbb{R}^d$, for some $d\in \mathbb{N}$, such that for all similar pairs $\{x,y\}\in {\cal S}$ we have $\|f(x)-f(y)\|_2 \leq u$, and for all dissimilar pairs $\{x,y\}\in {\cal D}$ we have $\|f(x)-f(y)\|_2 \geq \ell$, for some $u,\ell>0$.
More generally, one seeks to find a mapping $f$ that maximizes the fraction of correctly classified pairs of photos (see Figure \ref{fig:example}).

\begin{figure}
\begin{center}
\scalebox{0.68}{\includegraphics{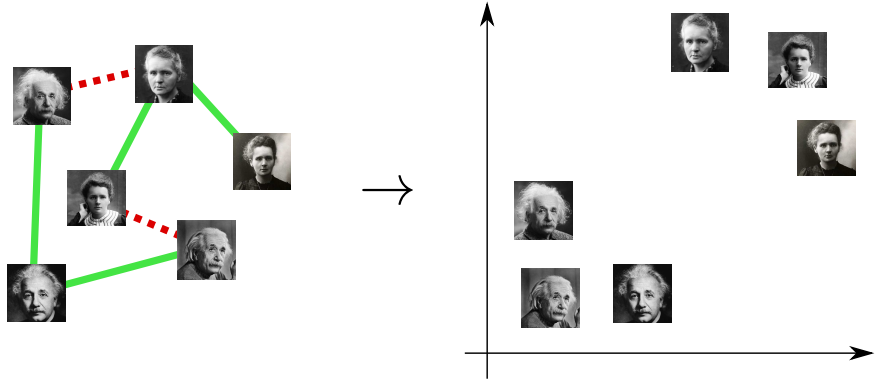}}
\caption{An illustration of the metric learning framework.
The input consists of a set of objects (here, a set of photos), with some pairs labeled as ``similar'' (depicted in green), and some pairs labeled ``dissimilar'' (depicted in red).
The output is an embedding into some metric space (here, the Euclidean plane), such that similar objects are mapped to nearby points, while dissimilar objects are mapped to points that are far from each other.
\label{fig:example}}
\end{center}
\end{figure}

\subsection{Problem formulation}

At the high level, an instance of a metric learning problem consists of some universe of objects $X$, together with some similarity information on subsets of these objects.
Here, we focus on similarity and dissimilarity constraints.
Specifically, an instance is a tuple $\phi=(X,{\cal S}, {\cal D}, u, \ell)$, where $X$ is a finite set, with $|X|=n$, ${\cal S}, {\cal D}\subset \binom{X}{2}$, which are sets of pairs of objects that are labeled as ``similar'' and ``dissimilar'' respectively, and $u,\ell>0$.
We refer to the elements of ${\cal S}\cup {\cal D}$ as \emph{constraints}.
We focus on the case where ${\cal S}\cap {\cal D}=\emptyset$, and ${\cal S}\cup {\cal D}= \binom{X}{2}$.
Let $f:X\to Y$ be a mapping into some target metric space $(Y,\rho)$.
As it is typical with geometric realization problems, we relax the definition to allow for a small multiplicative error $c\geq 1$ in the embedding.
We say that $f$ \emph{satisfies} $\{x,y\}\in {\cal S}$, if 
\begin{align}
\rho(f(x),f(y)) \leq u\cdot c, \label{eq:u}
\end{align}
and we say that it satisfies $\{x,y\}\in {\cal D}$ if
\begin{align}
\rho(f(x),f(y)) \geq \ell/c. \label{eq:l}
\end{align}
If $f$ does not satisfy some $\{x,y\}\in {\cal S}\cup {\cal D}$, then we say that it \emph{violates} it.
We refer to the parameter $c$ as the \emph{contrastive distortion} of $f$.
We also refer to $f$ as a \emph{constrastive embedding}, or \emph{$c$-embedding}.

\subsection{Our contribution}

We now briefly discuss our main contributions.

We focus on the problem of computing an embedding with low contrastive distortion, and with maximum \emph{accuracy}, which is defined to be the fraction of satisfied constraints.
Since we are assuming \emph{complete information}, that is ${\cal S}\cup {\cal D}=\binom{X}{2}$, it follows that the accuracy is equal to $k/\binom{n}{2}$, where $k$ is the number of satisfied constraints.
We remark that the setting of complete information requires a \emph{dense} set of constraints.
This case is important in applications when learning a metric space based on a  set of objects with fully-labeled pairs.
This is also the setting of other important machine learning primitives, such as Correlation Clustering (see \cite{bansal2004correlation}), which we discuss in Section \ref{sec:related}.

Our results are concerned with two main cases: \emph{perfect} and \emph{imperfect} information.
Here, the case of perfect information corresponds to the promise problem where there exists an embedding that satisfies all constraints.
On the other hand, in the case of imperfect information, no such promise is given.
As we shall see, the latter scenario appears to be significantly more challenging.
Our results are concerned with two different families of target spaces: $d$-dimensional Euclidean space, and trees.

\textbf{Learning Euclidean metric spaces.}
We begin our investigation by observing that the problem of computing a contrastive embedding into $d$-dimensional Euclidean space with perfect information is polynomial-time solvable for $d=1$, and it becomes NP-hard even for $d=2$.
The result for $d=1$ is obtained via a simple greedy algorithm, while the NP-hardness proof uses a standard reduction from the problem of recognizing unit disk graphs (see \cite{breu1998unit}).

\begin{theorem}[Learning the line with perfect information]\label{thm:perfect_info_line}
Let $\gamma=(X, {\cal S}, {\cal D}, u, \ell)$ be an instance of the metric learning problem.
Then there exists a polynomial-time algorithm which given $\gamma$, either computes an $1$-embedding $\widehat{f}:X\to \mathbb{R}$, with accuracy $1$, or correctly decides that no such embedding exists.
\end{theorem}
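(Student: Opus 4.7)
I would prove the theorem by reducing it to (i) finding a compatible linear order on $X$, and (ii) solving the resulting system of difference inequalities for the actual coordinates. The key observation is that any valid $1$-embedding $\widehat{f}:X\to\mathbb{R}$ induces a linear order $\pi$ on $X$ by sorting the values $\widehat{f}(x)$, and conversely, once $\pi$ is fixed, the embedding conditions (\ref{eq:u})--(\ref{eq:l}) with $c=1$ become the system
\begin{align*}
\widehat{f}(\pi(j))-\widehat{f}(\pi(i)) &\le u && \text{for $\{\pi(i),\pi(j)\}\in\mathcal{S},\ i<j$,}\\
\widehat{f}(\pi(j))-\widehat{f}(\pi(i)) &\ge \ell && \text{for $\{\pi(i),\pi(j)\}\in\mathcal{D},\ i<j$,}\\
\widehat{f}(\pi(i+1))-\widehat{f}(\pi(i)) &\ge 0 && \text{for $1\le i<n$,}
\end{align*}
a system of difference inequalities whose feasibility, along with an explicit solution when feasible, is decidable in polynomial time via shortest-path / negative-cycle detection (Bellman-Ford) on the associated constraint graph. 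The remaining work is to identify a compatible $\pi$, or certify none exists.

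\textbf{Greedy construction of the order.} I would enumerate the $n$ candidates for the leftmost vertex $v=\pi(1)$, fix $\widehat{f}(v)=0$, and grow the embedding one point at a time. At each step, consider the auxiliary linear program consisting of all constraints that are already linear under the partial left-to-right commitment: placed coordinates are fixed; for each unplaced $w$ and placed $p$, include $\widehat{f}(w)-\widehat{f}(p)\le u$ when $\{p,w\}\in\mathcal{S}$ and $\widehat{f}(w)-\widehat{f}(p)\ge\ell$ when $\{p,w\}\in\mathcal{D}$; for each pair of unplaced points $w_1,w_2$, include the (linear) constraint $|\widehat{f}(w_1)-\widehat{f}(w_2)|\le u$ when $\{w_1,w_2\}\in\mathcal{S}$; and enforce monotonicity $\widehat{f}(w)\ge\max_p\widehat{f}(p)$. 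Dissimilarity constraints between pairs of unplaced points are disjunctive and are deferred. If the LP is infeasible, abandon the current leftmost choice. Otherwise, for each unplaced $w$ compute the minimum feasible value of $\widehat{f}(w)$ in the LP, select $w^{\ast}$ achieving the smallest such minimum, set $\widehat{f}(w^{\ast})$ equal to it, and iterate. The algorithm outputs the first fully-constructed $\widehat{f}$ that satisfies all original constraints (including deferred ones), or infeasibility if every leftmost choice fails.

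\textbf{Main obstacle.} The principal challenge is proving correctness, namely, that whenever a valid $1$-embedding exists, some leftmost choice leads to a successful greedy run. My plan is to let $g$ be any valid embedding and take $v^{\ast}=\arg\min_{x\in X}g(x)$, translated so $g(v^{\ast})=0$; I would then show by induction on the greedy step that the LP is always feasible \emph{and} the full problem (with deferred constraints) remains feasible on the unplaced part. The inductive step is the delicate bit: it requires an exchange argument showing that fixing $w^{\ast}$ to its LP-minimum $m$ does not destroy full feasibility, by starting from any full-feasible completion and shifting $w^{\ast}$ downward to $m$ while compensating on nearby unplaced points, exploiting the minimality of $m$ and the linearity of the similarity constraints. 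Once the invariant is established, the final Bellman-Ford pass certifies the output. The running time is polynomial: $O(n)$ leftmost choices, $O(n)$ greedy steps per choice, each a polynomial-size LP solve, followed by one Bellman-Ford verification.
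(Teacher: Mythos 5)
Your high-level decomposition — reduce to (i) finding a compatible ordering and (ii) solving a difference-constraint system — is exactly the paper's plan: Lemma 6.1 computes a small family of candidate orderings and Lemma 6.2 solves the resulting LP of difference constraints. So the second half of your proposal matches the paper essentially verbatim. The gap is in the first half.

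Your rule for extending the partial order — place next the unplaced vertex with smallest LP-minimum, at that minimum — is a genuinely different greedy from the paper's (the paper contracts the placed prefix in $G_{\mathcal S}$ and takes a minimum-degree vertex among the neighbors of the contracted vertex, then argues those vertices form an automorphism orbit so the tie-break is harmless). You flag the inductive step as ``the delicate bit,'' and it really is: as stated, your greedy is \emph{not} correct, because the deferred dissimilarity constraints between pairs of unplaced vertices can make the LP-minimum unachievable, and your rule gives no way to detect this when several unplaced vertices tie for smallest LP-minimum. Concretely, with $u=1$, $\ell=2$, placed point $p$ at $0$, and unplaced $A,B,C$ with $\{p,A\},\{p,B\},\{p,C\}\in\mathcal D$, $\{A,C\},\{B,C\}\in\mathcal S$, and $\{A,B\}\in\mathcal D$ (deferred), every valid completion has $C$ strictly between $A$ and $B$ and hence $C\ge 3$, yet your LP (which omits $|A-B|\ge 2$) reports LP-minimum $2$ for all three points. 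If the arbitrary tie-break selects $C$ and fixes $\widehat{f}(C)=2$, the next step forces $A\le 3$ and $B\le 3$, and then $|A-B|\ge 2$ together with $A,B\ge 2$ collapses to infeasibility — the greedy run aborts even though a valid embedding exists. So the ``exchange argument'' you sketch (shift $w^*$ down to $m$ and compensate) cannot go through unmodified: the point you want to shift down may be \emph{required} to sit to the right of some other unplaced vertex by a deferred constraint, and no local compensation on ``nearby'' points fixes that. Either you need a tie-breaking rule that provably picks a leftmost-extendible vertex (which is essentially the combinatorial content of the paper's Lemma 6.1: work in $G_{\mathcal S}$, note the unplaced neighborhood of the contracted prefix is a clique, and take a minimum-degree vertex, using the twin/automorphism structure of the resulting unit-interval graph), or you need to enrich the LP to account for the deferred disjunctions, which defeats the purpose. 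As written, the proof has a real hole, not just an unpolished detail.
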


\begin{theorem}[Hardness of learning the plane with perfect information]\label{thm:np-hard}
Given an instance $\gamma=(X, {\cal S}, {\cal D}, u, \ell)$ of the metric learning problem, it is NP-hard to decide whether there exists an $1$-embedding $\widehat{f}:X\to \mathbb{R}^2$, with accuracy 1.
\end{theorem}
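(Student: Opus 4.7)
The plan is to give a polynomial-time reduction from unit disk graph (UDG) recognition, which was shown to be NP-hard by Breu and Kirkpatrick~\cite{breu1998unit}. Recall that $G=(V,E)$ is a unit disk graph iff there exists a map $g:V\to \mathbb{R}^2$ with $\{u,v\}\in E \Leftrightarrow \|g(u)-g(v)\|_2 \le 1$.

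Given an instance $G=(V,E)$ of UDG recognition, I would form the metric-learning instance $\gamma=(X,{\cal S},{\cal D},u,\ell)$ by taking $X=V$, ${\cal S}=E$, ${\cal D}=\binom{V}{2}\setminus E$, and $u=\ell=1$. This is a valid complete-information instance of polynomial size (with ${\cal S}\cap{\cal D}=\emptyset$ and ${\cal S}\cup{\cal D}=\binom{X}{2}$), and the claim is that $\gamma$ admits a $1$-embedding $\widehat{f}:X\to \mathbb{R}^2$ of accuracy $1$ iff $G$ is a UDG. The forward direction is immediate: any UDG realization $g$ of $G$, viewed as a map $X\to \mathbb{R}^2$, places edges at distance $\le 1 = u$ and non-edges at distance $>1\ge \ell$, and hence satisfies every constraint of $\gamma$.

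The main obstacle is the converse, because accuracy $1$ only guarantees that pairs in ${\cal D}$ lie at distance $\ge 1$, rather than strictly $>1$, so a priori the resulting embedding need not be a UDG realization in the strict sense. To handle this, I would rely on the fact that the Breu--Kirkpatrick reduction is assembled from a finite collection of combinatorial gadgets whose yes-instances admit a realization with a constant multiplicative gap $\beta>1$ between the maximum edge distance and the minimum non-edge distance. Setting instead $u=1$ and $\ell=\beta$ in the reduction above, the two directions of the equivalence go through cleanly without any boundary issue: a yes-instance of gap-UDG produces a $1$-embedding of $\gamma$ of accuracy $1$, and conversely any such embedding certifies that $G$ is a UDG because its non-edges sit at distance at least $\beta>1$ and its edges at distance at most $1$.

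Polynomial-time computability of the reduction is trivial (we copy the edges and non-edges of $G$ in $O(n^2)$ time and use constant-size descriptions for $u,\ell$), so NP-hardness of UDG recognition transfers to the decision version of computing a $1$-embedding with accuracy $1$ into $\mathbb{R}^2$, as claimed.
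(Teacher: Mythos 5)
Your proposal is correct and follows essentially the same route as the paper: both reduce from Breu--Kirkpatrick unit-disk-graph recognition, both notice that the naive choice $u=\ell=1$ creates a boundary issue, and both resolve it by appealing to the gap version of the hardness result (edges at distance $\le 1$, non-edges at distance $\ge 1+\alpha$ for a fixed $\alpha>0$), setting $u=1$ and $\ell=1+\alpha$ accordingly.
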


The above two results indicate that, except for what is essentially its simplest possible case, the problem is generally intractable.
This motivates the study of approximation algorithms.
Our first main result in this direction is a FPTAS for the case of perfect information, summarized in the following.

\begin{theorem}[Learning Euclidean metric spaces with perfect information]\label{thm:perfect_info}
Let $u,\ell>0$ be fixed constants.
Let $\gamma=(X, {\cal S}, {\cal D}, u, \ell)$ be an instance of the problem of learning a $d$-dimensional Euclidean metric space with perfect information, with $|X|=n$, for some $d\geq 2$.
Suppose that $\gamma$ admits an $1$-embedding $f^*:X\to \mathbb{R}^d$ with accuracy 1.
Then for any $\eps,\eps'>0$, there exists a randomized algorithm which given $\gamma$, $\eps$, and $\eps'$, computes a $(1+\eps')$-embedding $\widehat{f}:X\to \mathbb{R}^d$, with accuracy at least $1-\eps$, in time $n^{O(1)} g(\eps, \eps',d) = n^{O(1)} 2^{(\frac{d}{\eps \eps'})^{O(d^2)}}$, with high probability.
In particular, for any fixed $d$, $\eps$, and $\eps'$, the running time is polynomial.
\end{theorem}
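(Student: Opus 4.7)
The plan is to reduce the problem to enumerating a bounded number of ``template'' positions in $\mathbb{R}^d$ and then, for each template, greedily assigning each point of $X$ to the template position that satisfies the most of its constraints. The entire argument rests on a structural ``few-positions'' lemma.

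\textbf{Structural lemma.} The first goal is to show that whenever a perfect $1$-embedding $f^*$ exists, there is a $(1+\varepsilon')$-embedding of accuracy at least $1-\varepsilon$ whose image consists of $k=(d/(\varepsilon\varepsilon'))^{O(d^2)}$ distinct points, all lying in a Euclidean ball of radius $O(k)$ (after scaling so that $u=1$). The proof would have two components. The first is a rounding step: cover the image of $f^*$ by balls of radius $\Theta(\varepsilon')$ and collapse each ball to its centre, which produces an embedding of contrastive distortion at most $1+\varepsilon'$ with no new violations. The second is a folding argument that bounds the covering number: dissimilar pairs that the embedding separates by much more than $\ell$ can be pulled closer together at the cost of sacrificing at most an $\varepsilon$-fraction of constraints, which reduces the effective diameter and hence $k$ to a quantity depending only on $d,\varepsilon,\varepsilon'$.

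\textbf{Algorithm.} Overlay an axis-aligned grid of side $\Theta(\varepsilon'/\sqrt{d})$ on the ball supplied by the structural lemma; the grid has $M=(k/\varepsilon')^{O(d)}$ vertices. Enumerate every ordered $k$-tuple of grid vertices as a candidate template; there are $M^k = 2^{(d/(\varepsilon\varepsilon'))^{O(d^2)}}$ of them. For each tuple $(p_1,\ldots,p_k)$, independently assign every $x\in X$ to the position $p_i$ that satisfies the maximum number of its constraints under this template; each such round of assignments takes $O(nk)$ time and can be iterated a constant number of times if a mutually consistent assignment is desired. Output the embedding of highest accuracy across all templates. Correctness follows because the (grid-snapped) embedding produced by the structural lemma is one of the enumerated templates, and the greedy assignment on it achieves accuracy at least $1-\varepsilon$; the $(1+\varepsilon')$-distortion bound is inherited from the grid's resolution. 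The total running time is $n^{O(1)}\cdot 2^{(d/(\varepsilon\varepsilon'))^{O(d^2)}}$, as required. Randomisation enters only in an initial random translation/rotation used to ensure that the grid-snap argument succeeds simultaneously on the relevant pairs with high probability, via a standard union bound.

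\textbf{Main obstacle.} The principal technical difficulty is the structural lemma. One must balance aggressive clustering, which reduces $k$ but risks violating intra-cluster dissimilar constraints, against geometric separation, which preserves accuracy but inflates the diameter. The chain-type instances where the perfect embedding has $\Omega(n)$ diameter are exactly what forces one to sacrifice an $\varepsilon$-fraction of constraints; the $1-\varepsilon$ slack is essential. The doubly-exponential dependence on $d$ in the final running time is expected to trace back to a Sauer--Shelah-type count of the distinct ``constraint signatures'' that a single point can realise against an arrangement of $k$ spheres in $\mathbb{R}^d$, which controls how aggressively the folding argument may merge distant regions without collapsing too many distinct constraint types.
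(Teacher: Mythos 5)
Your high-level plan (reduce to a bounded discretization, then search) shares some spirit with the paper, but it differs at both of its load-bearing points, and both points have genuine gaps as written.

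\textbf{The structural lemma is not established and is stated too strongly.} You want a single $(1+\eps')$-embedding of accuracy $1-\eps$ whose entire image consists of $k$ points in a ball of bounded radius. The paper does not prove (and does not need) such a global statement. Its final embedding $\widehat{f}$ has an image whose cardinality and diameter grow with $n$: it partitions $X$, embeds each cluster into a bounded ball, and then translates the clusters far apart. Your ``folding argument'' --- pull far-apart dissimilar pairs closer, lose only an $\eps$-fraction --- is exactly the step you flag as the main obstacle, and you give no mechanism for it. Once you snap distant clusters onto shared grid positions you will set many dissimilar pairs to distance $0 < \ell/(1+\eps')$; simply asserting that this costs only an $\eps$-fraction of constraints requires a nontrivial packing/counting argument you have not supplied. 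The paper sidesteps this entirely: Lemma~\ref{lem:exact_Lip} shows that $(X,\rho_{\cal S})$ admits a Lipschitz distribution of partitions into small-$\rho_{\cal S}$-diameter clusters (cutting only an $\eps$-fraction of ${\cal S}$-edges in expectation), Lemma~\ref{lem:Lip_approx} lets one sample such a partition without knowing $f^*$, and Lemma~\ref{lem:exact_diam} shows each cluster then has a perfect embedding into a ball of radius depending only on $d$ and $\eps$. No folding is ever needed because inter-cluster dissimilar pairs are satisfied for free by placing clusters far apart.

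\textbf{The assignment step is unsound.} Even granting a bounded template $\{p_1,\ldots,p_k\}$, the problem ``choose $g:X\to\{p_1,\ldots,p_k\}$ maximizing the number of satisfied constraints'' is a dense MAX-CSP, not something a per-vertex greedy can solve: the number of constraints that $x$ satisfies at position $p_i$ depends on where every neighbour of $x$ is placed, so ``independently assign each $x$ to its best $p_i$'' is circular, and ``iterate a constant number of times'' is local search with no approximation guarantee. This is precisely where the paper invokes a different tool: Lemma~\ref{lem:pseudo_embed} uses Frieze--Kannan $\eps$-pseudoregular partitions (Theorem~\ref{thm:pseudoregular}) so that the number of satisfied constraints is controlled, up to an additive $\eps n^2$, by the profile $\tau$ of how many points from each regular cell map to each target position. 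One then enumerates the polynomially many quantized profiles $\tau'$ rather than the exponentially many assignments, and pseudoregularity certifies that the realized accuracy tracks the profile. Without this (or an equivalent regularity/sampling argument), your enumeration of templates does not yield an algorithm. The random translation/rotation you invoke is not where the randomness is needed; in the paper the randomness comes from the Lipschitz partition sampling, and the grid discretization is deterministic.

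In short: your outline omits both the decomposition that makes a bounded-diameter analysis apply (Lipschitz partitions of $(X,\rho_{\cal S})$ via Krauthgamer--Roughgarden) and the tool that makes the bounded-cardinality search tractable (weak regularity). Either gap alone would sink the argument; together they mean the proposal does not yet constitute a proof.
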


We also obtain the following QPTAS for the case of imperfect information.

\begin{theorem}[Learning Euclidean metric spaces with imperfect information]\label{thm:imperfect_info}
Let $d\geq 1$.
Let $u,\ell>0$ be fixed constants.
Let $\gamma=(X, {\cal S}, {\cal D}, u, \ell)$ be an instance of the problem of learning a $d$-dimensional Euclidean metric space, with $|X|=n$, for some $d\geq 1$.
Suppose that $\gamma$ admits an $1$-embedding $f^*:X\to \mathbb{R}^d$ with accuracy $1-\zeta$, for some $\zeta>0$.
	Then for any $\eps,\eps'>0$, there exists an algorithm which given $\gamma$, $\eps$, $\eps'$, and $\zeta$, computes a $(1+\eps')$-embedding $\widehat{f}:X\to \mathbb{R}^d$, with accuracy at least $1-O(\zeta^{1/2} \log^{3/4} n (\log\log n)^{1/2}) - \eps$, in time $n^{O(1)} 2^{\eps^{-2} (\frac{d \log n}{\zeta \eps'})^{O(d)}}$.
In particular, for any fixed $d$, $\eps$, $\eps'$, and $\zeta$, the running time is quasi-polynomial.
\end{theorem}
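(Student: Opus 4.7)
The plan is to reduce the imperfect-information problem to the perfect-information case of Theorem \ref{thm:perfect_info} by locating a small set of ``bad'' vertices whose removal eliminates all violations of the (unknown) optimal embedding. Let $f^{*}:X\to \mathbb{R}^{d}$ be the assumed $1$-embedding with accuracy $1-\zeta$, and let $H$ denote the violation graph on $X$ whose edges are the constraints violated by $f^{*}$; by hypothesis $|E(H)|\le \zeta\binom{n}{2}$. Any vertex cover $R$ of $H$ makes $f^{*}|_{X\setminus R}$ correct on every constraint of $\gamma|_{X\setminus R}$, so that sub-instance falls in the perfect-information regime, and Theorem \ref{thm:perfect_info} applied to it with parameter $\eps$ produces an embedding of $X\setminus R$ of accuracy at least $1-\eps$. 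Extending this embedding to $R$ arbitrarily contributes at most $2|R|/n$ additional error, so the task reduces to certifying $|R|\le O(n\zeta^{1/2}\log^{3/4}n(\log\log n)^{1/2})$.

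To find such an $R$ without access to $f^{*}$ or $H$, I would combine two standard ideas. First, exploit the density of the instance, ${\cal S}\cup{\cal D}=\binom{X}{2}$, via the additive-error PTAS framework of Arora-Karger-Karpinski and Frieze-Kannan: take a sample $T\subseteq X$ of size polylogarithmic in $n$ and enumerate $f^{*}(T)$ over a discretized grid in $\mathbb{R}^{d}$; this produces $2^{\mathrm{polylog}(n)}$ enumeration candidates and is the source of the $\log n$ term in the exponent of the running time. Second, for each guess of $f^{*}(T)$, formulate a convex relaxation that jointly encodes (a) which vertices lie in $R$, and (b) a $(1+\eps')$-embedding of $X\setminus R$ into $\mathbb{R}^{d}$ consistent with the guessed positions on $T$; round this relaxation using an $\ell_{2}^{2}$-metric argument in the spirit of Arora-Rao-Vazirani and Agarwal-Charikar-Makarychev-Makarychev for min-uncut and vertex-deletion problems. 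Output the best candidate embedding, extended arbitrarily to $R$.

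The main obstacle is analyzing the rounding tightly enough to yield the precise factor $\zeta^{1/2}\log^{3/4}n(\log\log n)^{1/2}$. I expect this to decompose as the product of a $\sqrt{\zeta}$ integrality-gap term, characteristic of SDP rounding when the optimal violation count is $\zeta\binom{n}{2}$, an $O(\sqrt{\log n\log\log n})$-type metric-partitioning factor coming from $\ell_{2}^{2}$-triangle-inequality region growing, and a residual $O(\log^{1/4}n)$ overhead from the discretization and union bound over enumerated guesses. Combining the enumeration cost with the $n^{O(1)}2^{(d/(\eps\eps'))^{O(d^{2})}}$ cost of Theorem \ref{thm:perfect_info} applied to each guess, and introducing an $\eps^{-2}$ Chernoff-type factor to control the empirical-versus-true accuracy gap across candidates, then produces the claimed running time $n^{O(1)}2^{\eps^{-2}(d\log n/(\zeta\eps'))^{O(d)}}$.
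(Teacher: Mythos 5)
Your central reduction step has a concrete flaw: you claim that the violation graph $H$ of $f^*$, which has at most $\zeta\binom{n}{2}$ edges, admits a vertex cover $R$ of size $O\bigl(n\zeta^{1/2}\log^{3/4}n(\log\log n)^{1/2}\bigr)$, so that $f^*|_{X\setminus R}$ is a perfect solution on the remaining sub-instance. This bound on the vertex cover does not hold in general. If $H$ is a perfect matching on $n$ vertices then $|E(H)|=n/2$, which is $\zeta\binom{n}{2}$ for $\zeta=1/(n-1)$, yet any vertex cover of $H$ has size at least $n/2$, whereas the bound you need is $O(\sqrt{n}\,\mathrm{polylog}\,n)$. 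So there need not exist any small vertex-deletion set that lands you back in the perfect-information regime, and the reduction to Theorem~\ref{thm:perfect_info} breaks down before the algorithmic question of finding $R$ even arises. The subsequent SDP-rounding machinery is left as a sketch and inherits this obstruction.

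The paper avoids vertex deletion entirely. It deletes \emph{edges}: Lemma~\ref{lem:well-linked} uses recursive sparsest cut (via the Arora--Lee--Naor $O(\sqrt{\log n}\log\log n)$ approximation) to remove at most an $\alpha$-fraction of all constraints so that each remaining component $C$ of $G_{\cal S}$ is an $\alpha'$-expander with $\alpha'=\Omega(\alpha/(\log^{3/2}n\log\log n))$. Then Lemma~\ref{lem:isoperimetry} exploits the expansion to show that even though $f^*$ may violate many constraints inside $C$, it can be patched to an embedding $f_C'$ with accuracy $1-O(\zeta_C/\alpha')$ whose image has diameter $O(u\log n/\alpha')$; the key point is that an expander minus a few edges is still connected with logarithmic combinatorial diameter, so points cannot be far apart. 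Lemma~\ref{lem:pseudoregular_embedding_euclidean} then enumerates over a discretized grid of this bounded-diameter ball using pseudoregular partitions, and the components are glued by translating them far apart. Choosing $\alpha=\zeta^{1/2}\log^{3/4}n(\log\log n)^{1/2}$ balances the edge-deletion loss $\alpha$ against the patching loss $O(\zeta/\alpha')$, giving the stated accuracy; the $\log n$ in the running-time exponent is exactly the diameter bound, not a sampling/enumeration overhead. The expansion-plus-isoperimetry idea is the missing ingredient your sketch needs, and it replaces, rather than supplements, the vertex-cover step.
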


\textbf{Learning tree metric spaces.}
We next consider the case of embedding into tree metric spaces.
More precisely, we are given an instance $(X, {\cal S}, {\cal D}, u, \ell)$ of the metric learning problem, and we wish to find a tree $\widehat{T}$, and an embedding $\widehat{f}:X\to \widehat{T}$, with maximum accuracy.
Note that the tree $\widehat{T}$ is not part of the input.
Our results closely resemble the ones from the case of learning Euclidean metric spaces.
Specifically, for the case of perfect information we obtain a PTAS, summarized in the following.

\begin{theorem}[Learning tree metric spaces with perfect information]\label{thm:trees_perfect_info}
Let $u,\ell>0$ be fixed constants.
Let $\gamma=(X, {\cal S}, {\cal D}, u, \ell)$ be an instance of the problem of learning a tree metric space with perfect information, with $|X|=n$.
Suppose that $\gamma$ admits an $1$-embedding $f^*:X\to V(T^*)$, for some tree $T^*$, with accuracy 1.
Then for any $\eps,\eps'>0$, there exists a randomized algorithm which given $\gamma$, $\eps$, and $\eps'$, computes some tree $\widehat{T}$, and a $(1+\eps')$-embedding $\widehat{f}:X\to V(\widehat{T})$, with accuracy at least $1-\eps$, in time $n^{O(1)} g(\eps, \eps') = n^{O(1)} 2^{1/(\eps\eps')^{O(1/\eps)}}$, with high probability.
In particular, for any fixed $\eps$ and $\eps'$, the running time is polynomial.
\end{theorem}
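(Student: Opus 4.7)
The plan is to adapt the FPTAS framework of Theorem~\ref{thm:perfect_info} to the setting where the target space is a tree. The algorithm has two stages. First, enumerate a family of candidate host trees $\widehat T$, each described by a combinatorial skeleton together with edge lengths snapped to an $\eps'$-grid. Second, for each candidate, run an assignment procedure that maps every $x\in X$ to a vertex of $\widehat T$ so as to maximize the number of satisfied constraints, admitting up to an $\eps$-fraction of violations.

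The structural backbone is a centroid-based hierarchical decomposition of the hypothetical optimum tree $T^\ast$. Normalize so that $u=1$; since $\ell$ is a fixed constant, the constraint-relevant diameter of the image of $X$ in $T^\ast$ is $O(1)$. At each level of the decomposition we guess the centroid's combinatorial type and the vector of edge lengths from the centroid to each of its child subtrees, rounded to multiples of $\eps' u$. The decomposition is unrolled only to the depth at which every remaining undescribed subtree contains at most $\eps n$ points of $X$; such subtrees are collapsed to single representative vertices. Constraints both of whose endpoints land inside one collapsed subtree are sacrificed, contributing at most $\eps \binom{n}{2}$ total violations, which is accommodated by the $\eps$-slack in the target accuracy.

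The outer enumeration branches at each level over the combinatorial type and the discretized edge-length vector, and a direct count over the resulting encoding space matches the claimed $2^{1/(\eps\eps')^{O(1/\eps)}}$ bound. For each fixed candidate $\widehat T$, the inner assignment step factors across subtrees of the decomposition and is solved in $n^{O(1)}$ time by a direct dynamic program: for each point $x\in X$ we try all host vertices of $\widehat T$ and retain the assignment preserving the most pending constraints on subtrees handled so far.

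The chief obstacle lies in the cumulative distortion analysis, especially for constraints with exactly one endpoint inside a collapsed subtree. For a pair whose hierarchical encoding is fully resolved, distances decompose additively along paths of $\widehat T$, so a per-edge $(1+\eps')$-rounding lifts to a global $(1+\eps')$-contrastive embedding. For a mixed pair, we place the representative of each collapsed subtree on its incoming edge at the position whose distance profile to the rest of $\widehat T$ matches the former centroid up to an additive $\eps' u$ error, so that mixed constraints still meet the $(1+\eps')$ bound. A secondary subtlety is that $T^\ast$ is not known to the algorithm, so the combinatorial skeleton must itself be searched; the number of centroid-compatible skeletons at each level is bounded by the same expression that drives the running time, so this search is absorbed into the outer enumeration.
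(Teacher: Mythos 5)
Your proposal departs from the paper and, as written, has two gaps that I do not see how to repair.

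\textbf{The diameter claim is false.} You assert that with $u$ normalized to $1$ and $\ell$ a fixed constant, ``the constraint-relevant diameter of the image of $X$ in $T^*$ is $O(1)$.'' This does not hold. Even with accuracy $1$, the optimal tree $T^*$ can have diameter $\Theta(nu)$: take $X$ mapped to a path with consecutive points at distance $u$, consecutive pairs in ${\cal S}$, and all pairs at distance $\geq \ell$ in ${\cal D}$ (feasible whenever $\ell\le 2u$, say). Nothing in the constraints confines the image to a ball of bounded radius. Because of this, an enumeration of bounded-diameter host trees cannot, by itself, capture an optimal solution. The paper handles exactly this obstacle by \emph{first} computing a random annulus partition of $G_{\cal S}$ centered at a vertex $v^*$ with width $\Delta/2 = \Theta(u/\eps)$: only an $O(\eps)$-fraction of ${\cal S}$-edges are cut in expectation, and a separate argument (the monotonicity statement \eqref{eq:T_induction0}) shows that each resulting cluster must lie within a subtree of $T^*$ of diameter $O(u/\eps)$. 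The clusters are then embedded separately and glued by a long edge from a new root so that all cut ${\cal D}$-constraints are automatically satisfied. Your proposal has no analogue of this decomposition, and without it the enumeration space is unbounded.

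\textbf{The inner assignment step is not solvable by the dynamic program you describe.} Once the host tree (or any finite host metric $N$) is fixed, the task of mapping $n$ points to $N$ so as to maximize the number of satisfied pairwise constraints is a dense MAX-2-CSP over $|N|$ labels, which is NP-hard in general; greedily placing points one at a time and ``retaining the assignment preserving the most pending constraints'' has no optimality guarantee because the objective is purely pairwise, not decomposable per point. The paper's Lemma~\ref{lem:pseudo_embed} resolves this by computing $\eps$-pseudoregular partitions of $G_{\cal S}$ and $G_{\cal D}$ (Theorem~\ref{thm:pseudoregular}) and then enumerating only the \emph{counts} of how many points of each regularity cell map to each host vertex; the regularity guarantee bounds the error introduced by permuting points within a cell. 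This is the step that makes the runtime $n^{O(1)}\cdot 2^{O(|N|^5/\eps^2)}$ and is the essential engine behind both the Euclidean and tree algorithms. Your proposal omits it entirely. Relatedly, the canonical-tree step (Lemma~\ref{lem:canonical_tree_existence}) is what makes the enumeration finite by showing that a single full $k'$-ary tree $T_{\eps'/2,\lceil\Delta/\eps'\rceil,\lceil 1/\eps\rceil}$ suffices as a host, at the cost of an extra $\eps$ in accuracy; your centroid-skeleton search is a plausible alternative way to get finiteness, but without the pseudoregularity step it leaves the inner optimization unaddressed, and the collapse-and-sacrifice bookkeeping is not carried out carefully enough to bound the total accuracy loss by $\eps$.
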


As in the case of learning Euclidean metric spaces, we also obtain a QPTAS for learning tree metric spaces in the presence of imperfect information, summarized in the following.

\begin{theorem}[Learning tree metric spaces with imperfect information]\label{thm:trees_imperfect_info}
Let $u,\ell>0$ be fixed constants.
Let $\gamma=(X, {\cal S}, {\cal D}, u, \ell)$ be an instance of the problem of learning a tree metric space with imperfect information, with $|X|=n$.
Suppose that $\gamma$ admits an $1$-embedding $f^*:X\to V(T^*)$, for some tree $T^*$, with accuracy $1-\zeta$, for some $\zeta>0$.
Then for any $\eps,\eps'>0$, there exists a randomized algorithm which given $\gamma$, $\eps$, $\eps'$, and $\zeta$, computes some tree $\widehat{T}$, and a $(1+\eps')$-embedding $\widehat{f}:X\to V(\widehat{T})$, with accuracy at least
$1-O(\zeta^{1/2} \log^{3/4} n (\log\log n)^{1/2}) - \eps$, in time $2^{((\log n)/(\zeta^{1/2}\eps \eps'))^{O(1/\eps)}}$.
In particular, for any fixed $\eps$, $\eps'$, and $\zeta$, the running time is quasi-polynomial.
\end{theorem}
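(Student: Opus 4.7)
The strategy parallels the proof of Theorem \ref{thm:imperfect_info} for the Euclidean imperfect-information case, adapted to trees. The idea is to reduce the imperfect-information problem to one of perfect information and then invoke Theorem \ref{thm:trees_perfect_info} as a black box. The accuracy loss of this reduction is what contributes the $O(\zeta^{1/2}\log^{3/4} n (\log\log n)^{1/2})$ additive error in the statement, while the PTAS of Theorem \ref{thm:trees_perfect_info} contributes the $\eps$ error and the quasi-polynomial running time.

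First I would set up an auxiliary denoising/partitioning problem on $X$ that captures the combinatorial structure of a tree-realizable constraint set. The key observation is that any mapping into a tree with thresholds $u$ and $\ell$ induces a natural hierarchy of cluster partitions on $X$ (obtained by thresholding tree distances at various scales), and conversely a sufficiently laminar family of partitions can be realized by a tree. The at most $\zeta\binom{n}{2}$ constraints violated by $f^*$ can therefore be viewed as the objective value of a Multicut-/Correlation-Clustering-style problem whose optimum is at most $\zeta\binom{n}{2}$.

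Next, I would apply an approximation algorithm for the auxiliary problem in the style of Agarwal--Charikar--Makarychev--Makarychev, whose natural ratio for Min-Uncut/Multicut-type objectives is $O(\sqrt{\log n \log\log n})$. Combined with a cluster-coarseness parameter $k$ that is then optimized, this produces a repaired instance $\gamma'$ that is consistent with some tree metric, at the cost of at most $O(\zeta^{1/2}\log^{3/4} n (\log\log n)^{1/2})\binom{n}{2}$ additional violated constraints relative to the original. The characteristic $\zeta^{1/2}$ dependence arises from a standard geometric-mean tradeoff: the repair cost is bounded by an expression of the form $\zeta\cdot\alpha(n)/k + O(k/n)\cdot \binom{n}{2}$ with $\alpha(n)=O(\sqrt{\log n\log\log n})$, and the minimum over $k$ gives the stated bound. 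Invoking Theorem \ref{thm:trees_perfect_info} on $\gamma'$ with parameters $\eps,\eps'$ then produces the desired tree $\widehat{T}$ and $(1+\eps')$-embedding $\widehat{f}$; summing the two sources of accuracy loss yields the final bound on $\gamma$, and the running time of the PTAS on the denoised instance (with effective accuracy parameters worsened by the denoising factor $\zeta^{1/2}$) gives $2^{((\log n)/(\zeta^{1/2}\eps \eps'))^{O(1/\eps)}}$.

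The main obstacle will be the denoising step. It requires both (i) a tractable relaxation of ``tree-metric realizability'' whose rounding inherits the $O(\sqrt{\log n\log\log n})$ factor known from cut and clustering problems, and (ii) a combinatorial argument, specific to tree metrics, that upgrades this approximation ratio into a $\zeta^{1/2}\cdot\mathrm{polylog}(n)$ additive bound on the number of additionally violated constraints, rather than the naive $\zeta\cdot\mathrm{polylog}(n)$ bound that a direct application of a cut-approximation algorithm would yield. This improvement is the tree-metric analogue of the rounding used in the Euclidean imperfect-information case of Theorem \ref{thm:imperfect_info}, and its analysis, exploiting the combinatorial rigidity of trees (laminarity of the induced cluster hierarchy), is the technical heart of the argument.
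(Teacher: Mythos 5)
Your proposal takes a genuinely different route from the paper, and it contains a real gap at its central step.

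The paper does not denoise the instance into a tree-realizable one and then invoke Theorem~\ref{thm:trees_perfect_info} as a black box. Instead, it works directly with the noisy instance: it first applies the well-linked decomposition of Lemma~\ref{lem:well-linked} (recursive sparsest cut via Arora--Lee--Naor) to cut away an $\alpha$-fraction of $\mathcal{S}\cup\mathcal{D}$, so that each remaining component $C$ of $G_{\mathcal{S}}$ is expanding in the sense of \eqref{eq:iso_mera}. It then uses the isoperimetric Lemma~\ref{lem:isoperimetry} to show, \emph{at the level of the embedding and not the constraints}, that the restriction of $f^*$ to $C$ can be modified to a map $f'_C$ of accuracy $1 - O(\zeta_C/\alpha')$ whose image has diameter $O(u\log n/\alpha')$; this $f'_C$ may still violate constraints, and no ``repaired'' instance is ever produced. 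Each sub-instance is then embedded directly via the canonical-tree existence and computation lemmas (\ref{lem:canonical_tree_existence}, \ref{lem:canonical_tree_compute}). Crucially, Theorem~\ref{thm:trees_perfect_info}'s proof relies on structural facts (such as \eqref{eq:T_induction0}) that only hold when a perfect embedding exists, so it cannot be invoked here; and its running time is $n^{O(1)}2^{1/(\eps\eps')^{O(1/\eps)}}$, independent of $n$ in the exponent, which does not match the stated bound. The stated running time $2^{((\log n)/(\zeta^{1/2}\eps\eps'))^{O(1/\eps)}}$ comes precisely from the fact that the diameter bound in the imperfect case is $\Delta = O(u\log n/\alpha')$ with $\alpha'$ depending on $\zeta$; your derivation of this running time from ``effective accuracy parameters worsened by $\zeta^{1/2}$'' is not correct.

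The genuine gap is the denoising step itself. You posit a tractable relaxation of tree-metric realizability admitting an $O(\sqrt{\log n\log\log n})$-approximation that outputs a repaired, tree-consistent instance. No such algorithm is given, and it is not a straightforward application of Multicut or Correlation Clustering techniques: ``minimum number of constraint flips to make $(\mathcal{S},\mathcal{D})$ tree-realizable'' is a highly structured and, as far as is known, intractable optimization problem, and your proposed $\zeta\cdot\alpha(n)/k + O(k/n)\binom{n}{2}$ trade-off is not justified by any lemma. The $\zeta^{1/2}$ balance in the paper arises from a different calculation: the total loss is $\alpha + O(\zeta/\alpha')$ with $\alpha' = \Omega(\alpha/(\log^{3/2}n\log\log n))$, and minimizing $\alpha + O(\zeta\log^{3/2}n\log\log n/\alpha)$ over $\alpha$ gives $\alpha = \zeta^{1/2}\log^{3/4}n(\log\log n)^{1/2}$. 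Your high-level intuition that the quadratic trade-off behind $\zeta^{1/2}$ is akin to cut-approximation losses is in the right spirit, but the mechanism producing it here is the well-linkedness/isoperimetry pair, not a repair-then-solve reduction.
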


\subsection{Overview of our techniques}

We now give a brief overview of the techniques used in obtaining our approximation algorithms for the metric learning problem.
Interestingly, our algorithms for the Euclidean case closely resemble our algorithms for tree metric spaces.

\textbf{Learning Euclidean metric spaces.}
At the high level, our algorithms for  learning Euclidean metric spaces consist of the following  steps: 

\begin{description}
\item{\textbf{Step 1: Partitioning.}}
We partition the instance into sub-instances, each admitting an embedding into a ball of small diameter.
For the case of perfect information, it is easy to show that such a partition exists, using a Lipschitz partition of $\mathbb{R}^d$ (see \cite{charikar1998approximating}).
Such a partition can be chosen so that only a small fraction of similarity constraints are ``cut'' (that is, their endpoints fall in different clusters of the partition).
However, computing such a partition is a difficult task, since we do not have an optimal embedding (which is what we seek to compute).
We overcome this obstacle by using a result of Krauthgamer and Roughgarden \cite{krauthgamer2011metric}, which allows us to compute such a partition, without access to an optimal embedding.

For the case of imperfect information, the situation is somewhat harder since there might be no embedding that satisfies all the constraints.
This implies that the partition that we use in the case of perfect information, is not guaranteed to exist anymore.
Therefore, instead of using a geometric partitioning procedure, we resort to graph-theoretic methods.
We consider the graph $G_{\cal S}$, with $V(G)=X$, and with $E(G)={\cal S}$.
Roughly speaking, we partition $G_{\cal S}$ into expanding subgraphs.
This can be done by deleting only a relatively small fraction of edges.
For each expanding subgraph, we can show that the corresponding sub-instance admits an embedding into a subspace of small diameter.

\item{\textbf{Step 2: Embedding each sub-instance.}}
Once we have partitioned the instance into sub-instances, with each one admitting an embedding into a ball of small diameter, it remains to find such an embedding.
This is done by first discretizing the target ball, and then using the theory of pseudoregular partitions (see e.g.~\cite{frieze1996regularity}) to exhaustively find a good embedding.
The discretization step introduces contrastive distortion $1+\eps'$, for some $\eps'>0$ that can be made arbitrarily small in expense of the running time. 

\item{\textbf{Step 3: Combining the embeddings.}}
Finally, we need to combine the embeddings of the sub-instances to an embedding of the original instance.
This can easily be done by ensuring that the images of different clusters are sufficiently far from each other.
Since only a small fraction of similarity constraints are cut by the original partition, it follows that the resulting accuracy is high.
\end{description}

\textbf{Learning tree metric spaces.}
Surprisingly, the above template is also used, almost verbatim, for the case of learning tree metric spaces.
The only difference is that the partitioning step now resembles the random partitioning scheme of Klein, Plotkin and Rao \cite{klein1993excluded} for minor-free graphs. 
The embedding step requires some modification, since the space of trees of bounded diameter is infinite, and thus exhaustive enumeration is not directly possible.
We resolve this issue by first showing that if there exists an embedding into a tree of small diameter, then there also exists an embedding into a \emph{single} tree, which we refer to as \emph{canonical}, and with only slightly worse accuracy.

\subsection{Related work}\label{sec:related}

\textbf{Metric embeddings.}
The theory of metric embeddings is the source of several successful methods for processing metrical data sets, which have found applications in metric learning (see, e.g.~\cite{shakhnarovich2005learning,kulis2013metric}).
Despite the common aspects of the use of metric embeddings in metric learning and in algorithm design, there are some key differences.
Perhaps the most important difference is that the metric embedding methods being used in algorithm design often correspond to unsupervised metric learning tasks.
Consequently, problems and methods that are studied in the context of metric learning, have not received much attention from the theoretical computer science community.
Many of the existing methods used in supervised metric learning tasks are based on convex optimization, and are thus limited to specific kinds of objective functions and constraints (see e.g.~\cite{weinberger2009distance,davis2007information}).

The problems considered in this paper are closely related to questions studied in the context of computing low-distortion embeddings.
For example, the problem of learning a tree metric space is closely related to the problem of embedding into a tree, which has been studied under various classical notions of distortion (see e.g.~\cite{badoiu2007approximation,DBLP:conf/focs/Bartal96, DBLP:conf/soda/NayyeriR17, alon2008ordinal,ailon2005fitting}).
Similarly, several algorithms have been proposed for the problem of computing low-distortion embeddings into Euclidean space (see e.g.~\cite{linial1995geometry,arora2008euclidean,matouvsek2010inapproximability,nayyeri2015reality, badoiu2003approximation}).
Various extensions and generalizations have also been considered for maps that approximately preserve the relative ordering of distances (see e.g.~\cite{bilu2005monotone,alon2008ordinal,buadoiu2008ordinal}).
However, we remark that all of the above results correspond to the unsupervised version of the metric learning problem, and are thus not directly applicable in the supervised setting considered here.

\textbf{Other types of supervision and embeddings.}
We note that other notions of supervision have also been considered in the metric learning literature.
For example, instead of pairs of objects that are labeled either ``similar'' or ``dissimilar'', another possibility is to have triple constraints of the form $(x,y,z)\in \binom{X}{3}$, encoding the fact that ``$x$ is more similar to $y$ than $z$''.
In this case, one seeks to find a mapping $f:X\to Y$, such that $\rho(f(x),f(y))<\rho(f(x),f(z)) - m$, for some \emph{margin} parameter $m>0$ (see e.g.~\cite{weinberger2009distance}).
However, the form of supervision that we consider is one of the most popular ones in practice.
Furthermore, it is often desirable that the mapping $f$ has a special form, such as linearity (when $X$ is a subset of some linear space), or being specified implicitly via a set of parameters, as is the case of mappings computed by neural networks.
We believe that our work can lead to further theoretical understanding of the metric learning problem under different types of supervision, and for specific classes of embeddings.

\textbf{Correlation Clustering.}
The supervised metric learning problem considered here can be thought of as a generalization of the classical Correlation Clustering problem \cite{bansal2004correlation}.
More specifically, the Correlation Clustering problem is precisely the metric learning problem with finite contrastive distortion, for the special case when the host is the uniform metric space, with $u=0$ and $\ell=1$.

\subsection{Organization}
The rest of the paper is organized as follows.
Section \ref{sec:prelim} introduces some notation and definitions.
Section \ref{sec:pseudoregular} shows how pseudoregular partitions can be used to compute near-optimal embeddings into spaces of bounded cardinality.
Section \ref{sec:euclidean_perfect} presents the algorithm for learning Euclidean spaces with perfect information.
The algorithm for learning Euclidean spaces with imperfect information is given in Section \ref{sec:euclidean_imperfect}.
The algorithms for learning tree metric spaces under perfect and imperfect information are given in Sections \ref{sec:trees_perfect} and \ref{sec:trees_imperfect} respectively.
Section \ref{sec:line_perfect} presents the exact algorithm for learning line metric spaces with perfect information.
Finally, the NP-hardness proof of learning the plane with perfect information, is given in Section \ref{sec:np-hard}

\section{Preliminaries}\label{sec:prelim}

For any non-negative integer $n$, we use the notation $[n]=\{1,\ldots,n\}$, with the convention $[0]=\emptyset$.
Let $M=(X,\rho)$ be some metric space.
We say that $M$ is a \emph{line} metric space if it can be realized as a submetric of $\mathbb{R}$, endowed with the standard distance.
For a graph $G$ and some $v\in V(G)$, we write

$N_G(v) = \{u\in V(G):\{v,u\}\in E(G)\}$.

For some $U\subset V(G)$, we denote by $E(U)$ the set of edges in $G$ that have both endpoints in $U$; that is
$E(U) = \{\{u,v\}\in E(G):\{u,v\}\subseteq U\}$.
For some $U, U'\subset V(G)$, we also write
$E(U,U') = \{\{u,v\}\in E(G): u\in U, v\in U'\}$.

\section{Pseudoregular partitions and spaces of bounded cardinality}\label{sec:pseudoregular}

In this Section we present an algorithm for computing a nearly-optimal embedding, provided that there exists a solution contained inside a ball of small cardinality.

Let $G$ be a $n$-vertex graph.
For any disjoint $A, B\subseteq V(G)$, let
$e(A,B)=|E(A,B)|$, where $E(A,B)$ denotes the set of edged with one endpoint in $A$ and one in $B$, and let
$\density(A,B)=\frac{e(A,B)}{|A|\cdot |B|}$.
%We say that $\{A,B\}$ is \emph{$\eps$-regular}, for some $\eps>0$, if for any $A'\subseteq A$ with $|A'|\geq \eps|A|$, and $B'\subseteq B$ with $|B'|\geq \eps|B|$, we have 
%\[
%|d(A',B')-d(A,B)| < \eps.
%\]
Let ${\cal V}=V_1,\ldots,V_k$ be a partition of $V(G)$.
For any $i,j\in [k]$, let $\density_{i,j}=\density(V_i,V_j)$.
For any $U\subseteq V(G)$, let $U_i=U\cap V_i$.
%For any pair of disjoint $S,T\subseteq V(G)$, we define
%\[
%\Delta(S,T)=e(S,T) - \sum_{i\in [k]} \sum_{j\in [k]} \density_{i,j} |S\cap V_i| |T\cap V_j|
%\]
The partition ${\cal V}$ is called \emph{$\eps$-pseudoregular} if for all disjoint $S,T\subseteq V(G)$, we have 
%\[
$\left|e(S,T) - \sum_{i\in [k]} \sum_{j\in [k]} \density_{i,j} |S\cap V_i| |T\cap V_j|\right| \leq \eps n^2$.
%\]
It is also called \emph{equitable} if for all $i,j\in [k]$, we have $||V_i|-|V_j|| \leq 1$.
We recall the following result due to Frieze and Kannan \cite{frieze1999quick} on computing pseudoregular partitions (see also \cite{frieze1996regularity,bansal2009regularity}).

\begin{theorem}[\cite{frieze1999quick}]\label{thm:pseudoregular}
There exists a randomized algorithm which given an $n$-vertex graph $G$, and $\eps,\delta>0$, computes an equitable $\eps$-pseudoregular partition of $G$ with at most $k=2^{O(\eps^{-2})}$ parts, in time $2^{{O}(\eps^{-2})} n^2 / (\eps^2 \delta^3)$, with probability at least $1-\delta$.
%\footnote{The $\widetilde{O}$-notation hides $(\log{1/\eps})^{O(1)}$ factors.}
\end{theorem}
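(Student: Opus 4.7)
The plan is to follow the algorithmic weak regularity approach of Frieze and Kannan, which combines an index-increment argument in the spirit of the Szemer\'edi regularity lemma with a randomized cut-finding subroutine. I would begin by fixing the potential function
\[
\Phi({\cal V}) = \frac{1}{n^2} \sum_{i,j \in [k]} \density_{i,j}^2 \, |V_i| \, |V_j|,
\]
which lies in $[0,1]$ and is monotone under refinement. The algorithm maintains a current partition ${\cal V}$, starting from the trivial partition $\{V(G)\}$, and repeatedly asks: is ${\cal V}$ already $\eps$-pseudoregular? If not, a witness pair $(S,T)$ with $|e(S,T) - \sum_{i,j} \density_{i,j}|S\cap V_i||T\cap V_j|| > \eps n^2$ is produced and used to refine ${\cal V}$ by intersecting each part with $S$, with $T$, and with their complements. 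A standard calculation shows that any such refinement increases $\Phi$ by at least $\Omega(\eps^2)$, so after $O(\eps^{-2})$ rounds we are certified $\eps$-pseudoregular. Since each round multiplies the number of parts by at most $4$, we end with $k \le 2^{O(\eps^{-2})}$ parts.

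The technical heart of the proof, and what I expect to be the main obstacle, is the witness subroutine: given ${\cal V}$, either certify $\eps$-pseudoregularity or produce a violating $(S,T)$ in time $\tilde{O}(n^2)$. The plan is to recast the question as approximating the cut norm of the residual matrix $R = A - \sum_{i,j} \density_{i,j} \mathbf{1}_{V_i} \mathbf{1}_{V_j}^\top$ (where $A$ is the adjacency matrix). By sampling a uniformly random subset $Q \subseteq V(G)$ of size $s = \Theta(\eps^{-2} \log(1/\delta'))$ and examining the induced submatrix $R|_Q$, one shows via Hoeffding-type bounds that with probability $1-\delta'$ any $\Omega(\eps)$-discrepancy cut on the whole graph corresponds to a cut on $Q$ of comparable discrepancy (normalized). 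On the sample of size $s$, one can afford to enumerate subsets $S' \subseteq Q$ exhaustively in time $2^s = 2^{O(\eps^{-2} \log(1/\delta'))}$, and for each candidate $S'$ extend it to the best $T \subseteq V(G)$ by a simple greedy/threshold rule that inspects each vertex's contribution in $O(n)$ time; the best $S$ is then recovered symmetrically. This gives one witness search in time $2^{O(\eps^{-2})} n^2$.

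Taking a union bound over the $O(\eps^{-2})$ refinement rounds, setting $\delta' = \Theta(\delta \eps^2)$ suffices, which accounts for a factor of roughly $1/(\eps^2 \delta)$ in the running time; the cubic dependence on $1/\delta$ in the claimed bound comes from being slightly wasteful in the Chernoff estimates needed to certify that no witness exists (failure probability enters three times: once for the sampling, once for the estimation of $\density_{i,j}$ on the sample, and once for the density estimates used inside the greedy extension). Summing the costs gives the stated $2^{O(\eps^{-2})} n^2 / (\eps^2 \delta^3)$.

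Finally, to make the output equitable, I would apply a standard post-processing step: compute $m = \lfloor n/k \rfloor$, chop each part $V_i$ arbitrarily into blocks of size $m$ plus one smaller remainder block of size $< m$, and then merge all remainder vertices into a single ``leftover'' bin that is redistributed arbitrarily among the blocks. This alters at most $k$ vertices' assignments per original part, changing $e(S,T)$ by at most $k \cdot m + k^2 = O(n)$ for any $S,T$, which is absorbed into the $\eps n^2$ slack (after rerunning with a slightly smaller $\eps$, since $k = 2^{O(\eps^{-2})}$ is independent of $n$). Because the block count is still $2^{O(\eps^{-2})}$ and the equitability adjustment is purely combinatorial, the running time and part count are preserved up to constants.
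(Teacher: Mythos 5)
The paper does not prove this theorem at all; it is quoted verbatim as a black-box result of Frieze and Kannan, so there is no in-paper proof to compare against. Taking your sketch as a reconstruction of the \emph{original} Frieze--Kannan argument, the high-level structure is the standard one and is essentially right: the mean-square density potential $\Phi$, the $\Omega(\eps^2)$ index increment per refinement giving $O(\eps^{-2})$ rounds and $4^{O(\eps^{-2})}=2^{O(\eps^{-2})}$ parts, the reduction of the witness step to approximating the cut norm of the residual matrix $R = A - \sum_{i,j}\density_{i,j}\mathbf{1}_{V_i}\mathbf{1}_{V_j}^{\top}$ by sampling a small vertex subset and greedily extending candidate cuts, and the equitability post-processing that re-blocks the parts and absorbs the leftover into the $\eps n^2$ slack (after shrinking $\eps$ by a constant factor).

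There is, however, a concrete quantitative gap in how you handle the failure probability, and it breaks the claimed running time. You take a Chernoff-style sample of size $s = \Theta(\eps^{-2}\log(1/\delta'))$ and then enumerate all $2^s$ subsets of the sample. But $2^s = (1/\delta')^{\Theta(\eps^{-2})}$, a polynomial in $1/\delta'$ whose degree blows up as $\eps\to 0$; setting $\delta'=\Theta(\delta\eps^2)$ and invoking ``slack in the Chernoff constants'' cannot turn this into the fixed polynomial $\delta^{-3}$ in the theorem statement. The actual Frieze--Kannan argument keeps the sample size independent of $\delta$ (only $\mathrm{poly}(1/\eps)$) and controls the failure probability by second-moment (Chebyshev-type) estimates, which is precisely why the $1/\delta$ dependence comes out as a fixed polynomial rather than $\delta^{-\Theta(\eps^{-2})}$. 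Alternatively, observe that any candidate cut $(S,T)$ can be scored against $R$ exactly in $O(n^2)$ time, so the witness search has one-sided error and can be boosted to confidence $1-\delta$ by $O(\log(1/\delta))$ independent repetitions with a constant-size sample, which is again within the stated bound. As written, though, your sampling step over-spends on $\delta$ and does not deliver the time bound you set out to prove.
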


The following is the main result of this Section.

\begin{lemma}[Pseudoregular partitions and embeddings into spaces of small cardinality]\label{lem:pseudo_embed}
Let $\eps, \eps'>0$.
Let $\gamma_C=(C, {\cal S}, {\cal D}, u, \ell)$ be an instance of the metric learning problem, with $|C|=n$.
Let $M=(N,\rho)$ be some metric space.
Suppose that $\gamma_C$ admits a $(1+\eps')$-embedding $g^*:C\to N$ with accuracy $r^*$.
Then, there exists an algorithm which given $\gamma_C$ and $M$, computes a $(1+\eps')$-embedding $\widehat{g}:C\to N$, with accuracy at least $r^*-\eps$.
The running time is $n^{O(1)} 2^{O(|N|^5/\eps^2)}$.
\end{lemma}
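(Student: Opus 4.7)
The plan is to reduce the search for $\widehat g$ to an exhaustive enumeration over a small discrete space of embedding \emph{profiles} defined by a pseudoregular partition of $C$. First I would apply Theorem~\ref{thm:pseudoregular} to the graph $G_{\cal S}$ on vertex set $C$ with edge set ${\cal S}$, using pseudoregularity parameter $\delta = \Theta(\eps/|N|^{O(1)})$, to obtain an equitable $\delta$-pseudoregular partition ${\cal V} = V_1, \ldots, V_k$ with $k = 2^{O(1/\delta^2)}$ parts. Since ${\cal S} \cup {\cal D} = \binom{C}{2}$ and the two sets are disjoint, for any disjoint $S, T$ the identity $e_{G_{\cal D}}(S,T) = |S||T| - e_{G_{\cal S}}(S,T)$ shows that ${\cal V}$ is automatically $\delta$-pseudoregular for $G_{\cal D}$ too, with block densities $1 - \density_{i,j}$; hence a single invocation of Frieze--Kannan suffices.

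The structural observation driving the algorithm is that for any embedding $g : C \to N$, the number of satisfied similarity constraints equals
\[
\sum_{\{p,q\} \in R_{\cal S}} e_{G_{\cal S}}(g^{-1}(p), g^{-1}(q)) \;+\; \text{(within-image terms)},
\]
where $R_{\cal S} = \{\{p,q\} \in \binom{N}{2} : \rho(p,q) \le u(1+\eps')\}$, and analogously for ${\cal D}$ using $R_{\cal D} = \{\{p,q\} : \rho(p,q) \ge \ell/(1+\eps')\}$. By pseudoregularity, each cross-image edge-count is determined, up to additive error $\delta n^2$, by the precomputed densities $\density_{i,j}$ and by the \emph{profile} matrix $M[i,p] := |V_i \cap g^{-1}(p)|$. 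Summing over the at most $|N|^2$ relevant pairs $\{p,q\}$ and choosing $\delta$ small in terms of $|N|$, the total error is $O(\eps n^2)$, so two embeddings with the same (or sufficiently close) profile have accuracies that differ by at most $O(\eps)$.

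This motivates the following algorithm: enumerate every feasible profile $M \in \mathbb{Z}_{\ge 0}^{k \times |N|}$ with row $i$ summing to $|V_i|$, up to a discretization of granularity $\eps |V_i|/|N|^{O(1)}$ per entry; for each enumerated profile build a candidate $\widehat g$ by assigning an arbitrary size-$M[i,p]$ subset of $V_i$ to image $p$, compute its exact accuracy in $n^{O(1)}$ time by iterating over all pairs of $X$, and output the candidate with the highest count. Correctness follows by applying the key observation to the profile $M^*$ induced by the hypothetical $g^*$: the nearest enumerated profile $\widetilde M$ agrees with $M^*$ on all but $O(\eps n/|N|^{O(1)})$ vertices per row, so the embedding derived from $\widetilde M$ inherits accuracy at least $r^* - \eps$ and, since we only permute images of vertices within single parts, the same contrastive distortion $1+\eps'$ as $g^*$.

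I expect the main obstacle to be tuning $\delta$ and the discretization grid so that the total enumeration fits within the stated bound $n^{O(1)} 2^{O(|N|^5/\eps^2)}$. A naive per-row enumeration yields $(|N|/\eps)^{O(|N|)}$ profiles per part raised to $k = 2^{O(1/\delta^2)}$ parts, which is much too large if $\delta$ is polynomially small in $|N|$. To hit a bound that is polynomial in both $|N|$ and $1/\eps$ in the exponent, the enumeration should instead be indexed by the at most $|N|^2 k^2$ ``transport'' variables $M[i,p] M[j,q]/|V_i||V_j|$ appearing in the pseudoregular estimator, each rounded to granularity $\eps/|N|^{O(1)}$; combined with a careful choice of $\delta$ this collapses the exponent to $|N|^{O(1)} / \eps^{2}$ as required. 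A secondary subtlety is handling within-image contributions to the satisfied count, which I would address by invoking pseudoregularity separately on each $g^{-1}(p)$ (split into halves to fit the disjoint-set hypothesis), contributing only a lower-order error absorbed into the same $O(\eps)$ slack.
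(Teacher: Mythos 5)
Your high-level plan is the same as the paper's: build the constraint graph on $C$, compute a Frieze--Kannan pseudoregular partition, observe that the accuracy of any embedding is (up to additive $O(\eps n^2)$ error) a function of the profile matrix $M[i,p] = |V_i \cap g^{-1}(p)|$, and then exhaustively enumerate a discretized grid of profiles, returning the best candidate. Your observation that a single pseudoregular partition of $G_{\cal S}$ suffices because ${\cal D}=\binom{C}{2}\setminus {\cal S}$ is correct and is slightly cleaner than the paper, which computes separate partitions of $H^{\cal S}$ and $H^{\cal D}$ and takes their common refinement (this gives exactly the same guarantee with block density $1-\density_{i,j}$).

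However, the gap you flag in the final paragraph of your proposal is real, and your proposed fix does not close it. The profile matrix has $k\cdot |N|$ entries, where $k = 2^{O(1/\delta^2)}$ and $\delta$ must be taken polynomially small in $|N|$ and $\eps$; so $k$ is already singly exponential in a polynomial of $|N|$ and $1/\eps$, and enumerating over per-entry discretizations gives a number of candidates that is \emph{doubly} exponential, far above the claimed $2^{O(|N|^5/\eps^2)}$. Re-indexing the enumeration by the ``transport'' quantities $M[i,p]M[j,q]/(|V_i||V_j|)$ does not help: there are $|N|^2 k^2$ of them, which is \emph{more} than $k|N|$, so the enumeration space grows, not shrinks; moreover those products are derived from the profile, so one still needs to recover a consistent $M$, and you give no mechanism for doing that. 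In other words, the key combinatorial step --- how the number of candidate profiles is brought down from $(\text{poly})^{|N|k}$ to $(\text{poly})^{|N|}$ --- is the crux, and your plan leaves it unresolved. (For what it is worth, the paper's own proof writes the enumeration count as $(n/c_2)^{|N|}$ rather than $(n/c_2)^{|N|k'}$, with $\tau(p,U)$ defined in a way that does not actually depend on $U$; so you have in effect identified that the paper's running-time accounting at this step needs to be examined carefully, but you have not supplied the missing argument either.) A correct write-up needs either a reason why the profile can be taken constant across parts without losing accuracy (which the current pseudoregularity hypothesis does not obviously give), or a different enumeration strategy with a concrete bound. Your secondary concern about within-image contributions is also not resolved by ``splitting each $g^{-1}(p)$ into halves'': those halves are themselves arbitrary sets, and the error from recursively applying pseudoregularity does not telescope to something small without additional argument. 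Note, though, that since $u,\ell > 0$, within-image ${\cal S}$-constraints are always satisfied and within-image ${\cal D}$-constraints are always violated, so the relevant quantity is the distribution of $|g^{-1}(p)|$, which \emph{is} captured by the profile row sums; a cleaner route is to argue directly from those sums rather than invoking pseudoregularity on $S = T$.
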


\begin{proof}
Let
$c_1=\frac{\eps}{8|N|^2}$.
We define graphs $H^{\cal S}$ and $H^{\cal D}$, where $V(H^{\cal S}) = V(H^{\cal D}) = C$, 
$E(H^{\cal S}) = {\cal S} \cap \binom{C}{2}$, and
$E(H^{\cal D}) = {\cal D} \cap \binom{C}{2}$.
Using Theorem \ref{thm:pseudoregular} we compute  equitable $c_1$-pseudoregular partitions ${\cal V}^{\cal S}$ and ${\cal V}^{\cal D}$ of $H^{\cal S}$ and $H^{\cal D}$ respectively, each with at most $k=2^{O(c_1^{-2})} = 2^{O(|N|^4/\eps^2)}$ parts.
Let ${\cal V}$ be the common refinement of ${\cal V}^{\cal S}$ and ${\cal V}^{\cal D}$; that is
%\[
${\cal V} = \bigcup_{U\in {\cal V}^{\cal D}} \bigcup_{U'\in {\cal V}^{\cal S}} \{U \cap U'\}$.
%\]
We have ${\cal V}=\{V_1,\ldots,V_{k'}\}$, for some $k'\leq k^2$.

Let
$c_2=\frac{\eps}{8|N|^2 k^4} |C|$.
For each $p\in N$, and for each part $U\in {\cal V}$, let
%\begin{align*}
$\tau(p,U) = |\{q\in C: g^*(q)=p\}|$.
%\end{align*}
Let also $\tau^*:N\times {\cal V} \to \{0,\ldots,|C|/c_2\}$, where 
%\[
$\tau^*(p,U) = \lfloor \tau(p,U)/c_2 \rfloor$.
%\]

Let $\widetilde{g}:C\to \mathbb{R}^d$ be obtained as follows.
We enumerate all possible $\tau':N\times {\cal V}\to \{0,\ldots,|C|/c_2\}$.
There are at most $n/c_2$ choices for each value $\tau'(p,U)$, and thus at most $(n/c_2)^{|N|}$ choices for $\tau'$ over all $p$ and $U$.
For any such choice $\tau'$, we construct a candidate $g':C\to N$ as follows:
For each $p\in N$, and for each $U\in {\cal V}$, we pick arbitrarily some set $C_{p,U}\subset U$, with $|C_{p,U}|=\tau'(p,U) \cdot c_2$, such that for any distinct $p,p'\in N$, we have $C_{p,U}\cap C_{p',U}=\emptyset$.
For all $q\in C_{p,U}$, we set $g'(q)=p$.
Let $Z$ be the set of points in $C$ that are not contained in any of the computed sets $C_{p,U}$.
We extend $g'$ to $Z$ by mapping each $q\in Z$ to some arbitrary point in $N$.
After trying all possible $\tau'$, we set $\widehat{g}$ to be the mapping $g'$ found with the highest accuracy.
This completes the description of the algorithm for computing $\widehat{g}$.
The running time of the algorithm is determined by the number of possible choices for $\tau'$, and thus for any fixed constants $u$ and $\ell$, it is at most 
%\[
$(|C|/c_2)^{|N|} |C|^{O(1)} \leq n^{O(1)} 2^{O(|N|^5/\eps^2)}$.
%\]
%as required.

It remains to show that the accuracy of $\widehat{g}$ is at least $r^*-\eps$.
For any mapping $g:C\to N$, and for any $x,y\in N$, let
\[
{\cal S}_{x,y}(g) = \{\{p,q\}\in {\cal S} : g(p)=x \text{ and } g(q) = y\}.
\]
We define ${\cal S}^{\SAT}(g)$ to be the set of all constraints in ${\cal S}$ that $g$ satisfies; that is
\[
{\cal S}^{\SAT}(g) = \{\{p,q\}\in {\cal S} : \rho(g(p),g(q)) \leq u (1+\eps')\}.
\]
For any $x,y\in N$, let ${\cal S}^{\SAT}_{x,y}(g)$ be the set of all constraints in ${\cal S}$ that $g$ satisfies, with one endpoints mapped to $x$ and another to $y$, and with no endpoint in $Z$; that is
\[
{\cal S}^{\SAT}_{x,y}(g) = ({\cal S}^{\SAT}(g) \cap {\cal S}_{x,y}(g)) \setminus ((Z \times C) \cup (C\times Z)).
\]
Clearly, we have
\[
{\cal S}^{\SAT}(g) \subseteq (Z\times C)\cup (C\times Z)\cup \bigcup_{x,y\in N} {\cal S}_{x,y}^{\SAT}(g) 
\]

Let $g'$ be the function considered during the construction of $\widehat{g}$, such that $\tau'=\tau^*$.
For any $x,y\in N$, we have by pseudoregularity that 
\begin{align*}
\left| \left| {\cal S}_{x,y}^{\SAT}(g') \right| - \left| {\cal S}_{x,y}^{\SAT}(g^*) \right| \right| &\leq \left|e_{{\cal S}}(g'^{-1}(x), g'^{-1}(y)) - e_{\cal S}((g^*)^{-1}(x), (g^*)^{-1}(y))\right| \\
 &\leq \left| e_{\cal S}(g'^{-1}(x), g'^{-1}(y)) - \sum_{i,j\in [k']} \density_{i,j} |g'^{-1}(x)| |g'^{-1}(y)| \right| \\
 &~~~~ + \left| e_{\cal S}((g^*)^{-1}(x), (g^*)^{-1}(y)) - \sum_{i,j\in [k']} \density_{i,j} |(g^*)^{-1}(x)| |(g^*)^{-1}(y)| \right| \\
 &~~~~ + \left| \sum_{i,j\in [k']} \density_{i,j} |(g^*)^{-1}(x)| |(g^*)^{-1}(y)| - \sum_{i,j\in [k']} \density_{i,j} |g'^{-1}(x)| |g'^{-1}(y)|\right| \\
 &\leq 2 c_1 |C|^2 + \sum_{i,j\in [k']} \density_{i,j} (|(g^*)^{-1}(x)| |(g^*)^{-1}(y)| - (|(g^*)^{-1}(x)|-c_2) (|(g^*)^{-1}(y)|-c_2)) \\
  &\leq 2 c_1 |C|^2 + \sum_{i,j\in [k']} \density_{i,j} 2 c_2 |C| \leq 2 c_1 |C|^2 + 2(k')^2 c_2 |C| \leq 2 c_1 |C|^2 + 2 k^4 c_2 |C|
\end{align*}
We thus get
\begin{align*}
\left| \left| {\cal S}^{\SAT}(\widehat{g}) \right| - \left| {\cal S}^{\SAT}(g^*) \right| \right| &\leq
\left| \left| {\cal S}^{\SAT}(g') \right| - \left| {\cal S}^{\SAT}(g^*) \right| \right|\\
 &\leq |C| |Z| + \sum_{x,y\in N} \left| \left| {\cal S}_{x,y}^{\SAT}(g') \right| - \left| {\cal S}_{x,y}^{\SAT}(g^*) \right| \right| \\
 &\leq |C| |Z| + |N|^2 (2 c_1 |C|^2 + k^4 c_2 |C|) \\
  &\leq |C| k^2 |N| c_2 + |N|^2 (2 c_1 |C|^2 + k^4 c_2 |C|) \leq \eps |C|^2 /2
\end{align*}

By an identical argument we obtain that the total number of constraints in ${\cal D}$ that are satisfied in $g^*$ but not in $\widehat{g}$ is at most $\eps |C|^2 /2$.
Therefore the accuracy of $\widehat{g}$ is at least $r^*-\eps$, concluding the proof.
\end{proof}

\section{Learning Euclidean metric spaces with perfect information}
\label{sec:euclidean_perfect}

It this section we describe our algorithm for learning $d$-dimensional Euclidean metric spaces with perfect information.
First, we present some tools from the theory of random metric partitions, and show how they can be used to partition an instance to smaller ones, each corresponding to a subspace of bounded diameter.
The final algorithm combines this decomposition with the algorithm from Section \ref{sec:pseudoregular} for learning metric spaces of bounded diameter.

\subsection{Euclidean spaces and Lipschitz partitions}

We introduce the necessary tools for randomly partitioning a metric space, and use them to obtain a decomposition of the input into pieces, each admitting an embedding into a ball in $\mathbb{R}^d$ of small diameter.

Let $M=(X,\rho)$ be a metric space.
A partition $P$ of $X$ is called \emph{$\Delta$-bounded}, for some $\Delta>0$, if every cluster in $P$ has diameter at most $\Delta$.
Let ${\cal P}$ be a probability distribution over partitions of $X$.
We say that ${\cal P}$ is \emph{$(\beta, \Delta)$-Lipschitz}, for some $\beta, \Delta>0$, if the following conditions are satisfied:
\begin{itemize}
\item
Any partition in the support of ${\cal P}$, is $\Delta$-bounded.
\item
For all $p,q\in X$, $\Pr_{P\sim {\cal P}}[P(x)\neq P(y)] \leq \beta \cdot \rho(x,y)$.
\end{itemize}

\begin{lemma}[\cite{charikar1998approximating}]\label{lem:Rd-decompose}
For any $d\geq 1$, and any $\Delta>0$, we have that $d$-dimensional Euclidean space admits a $(O(\sqrt{d}/\Delta), \Delta)$-Lipschitz distribution.
\end{lemma}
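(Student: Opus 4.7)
The plan is to construct the distribution explicitly, and the first natural attempt is a randomly shifted axis-aligned grid. Fix a side length $s>0$, sample a uniform shift $t\in[0,s]^d$, and declare $p\sim q$ iff $\lfloor(p_i+t_i)/s\rfloor=\lfloor(q_i+t_i)/s\rfloor$ for every coordinate $i\in[d]$. Each cluster is an axis-aligned cube of side $s$, and therefore has Euclidean diameter $s\sqrt{d}$; choosing $s=\Delta/\sqrt{d}$ yields $\Delta$-boundedness.

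For the separation probability, the random shifts act independently across coordinates, and the probability that the $i$-th coordinate separates $p$ from $q$ is $\min(1,|p_i-q_i|/s)$. A union bound together with Cauchy--Schwarz gives
\[
\Pr[p\not\sim q]\;\le\;\sum_{i=1}^d \frac{|p_i-q_i|}{s}\;\le\;\frac{\sqrt{d}\,\|p-q\|_2}{s},
\]
which for $s=\Delta/\sqrt{d}$ becomes $(d/\Delta)\cdot\|p-q\|_2$, i.e.\ $\beta=O(d/\Delta)$. This is a factor of $\sqrt{d}$ weaker than the claimed bound.

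The main obstacle is closing this $\sqrt{d}$ gap. I would replace the cubical tiles by random Euclidean balls in the Calinescu--Karloff--Rabani style: fix a sufficiently dense net $\mathcal{N}\subset\mathbb{R}^d$, sample a radius $R$ uniformly in $[\Delta/4,\Delta/2]$ together with a uniformly random permutation $\sigma$ of $\mathcal{N}$, and assign each point $x$ to the first net center $c$ (in the order $\sigma$) with $\|x-c\|_2\le R$. Since every cluster lies inside a Euclidean ball of diameter at most $\Delta$, $\Delta$-boundedness is preserved. For the separation estimate, a center $c$ can split the pair $(p,q)$ only if $R$ falls in the length-at-most-$\|p-q\|_2$ window between $\|c-p\|_2$ and $\|c-q\|_2$, contributing $O(\|p-q\|_2/\Delta)$ over the random $R$. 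Averaging over $\sigma$ converts the tally of potentially separating centers into a harmonic-type sum in their ranks, and the crux is a Euclidean-specific estimate on the typical number of net points near the boundary of a ball around $p$, which improves the naive $d$-factor to $\sqrt{d}$. This last estimate---which genuinely uses the Euclidean rather than merely doubling structure of the host, and where the simpler grid analysis falls short---is the main technical step I expect to require the most care.
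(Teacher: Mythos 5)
Your first paragraph correctly analyzes the shifted cube grid and correctly concludes that it gives only $O(d/\Delta)$, a factor $\sqrt d$ short of the claim. Your second paragraph identifies the right class of construction (Euclidean ball carving, as in the cited Charikar--Chekuri--Goel--Guha--Plotkin lemma), but the argument then stops at the exact point where the $\sqrt d$ must be produced: you state that "the crux is a Euclidean-specific estimate on the typical number of net points near the boundary of a ball around $p$, which improves the naive $d$-factor to $\sqrt d$," and flag that step as the one you cannot carry out. That step \emph{is} the lemma; without it you have not improved on the grid bound, so this is a genuine gap rather than a technicality.

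The gap is also slightly misdiagnosed. The improvement does not come from a net-counting bound or from the CKR harmonic-sum book-keeping; routing through a random radius and a CKR-type analysis is a detour whose natural output carries a logarithmic overhead you would then need to cancel. In $\mathbb{R}^d$ the direct argument is short. Fix $r=\Delta/2$, take a sufficiently fine net $\mathcal N$, impose a uniformly random order on $\mathcal N$, and assign each point to the first center within distance $r$. Every cluster sits inside a ball of radius $r$, so the diameter bound is immediate. For $\eps:=\|p-q\|_2$, the pair $(p,q)$ is separated exactly when the first center landing in $B(p,r)\cup B(q,r)$ actually lies in the symmetric difference $B(p,r)\triangle B(q,r)$, so for a fine enough net
\begin{align*}
\Pr[\text{cut}] \;\approx\; \frac{\mathrm{vol}\bigl(B(p,r)\triangle B(q,r)\bigr)}{\mathrm{vol}\bigl(B(p,r)\cup B(q,r)\bigr)}
\;\le\; \frac{2\eps\, V_{d-1}(r)}{V_d(r)},
\end{align*}
using that $\frac{d}{d\eps}\mathrm{vol}\bigl(B(p,r)\cap B(q,r)\bigr) = -V_{d-1}\bigl(\sqrt{r^2-\eps^2/4}\bigr)$ has magnitude at most $V_{d-1}(r)$. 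The key identity is
\begin{align*}
\frac{V_{d-1}(r)}{V_d(r)} \;=\; \frac{\Gamma(1+d/2)}{\sqrt{\pi}\,\Gamma\bigl((d+1)/2\bigr)}\cdot\frac1r \;=\; \Theta\!\left(\frac{\sqrt d}{r}\right),
\end{align*}
a Stirling estimate on a ratio of Gamma functions, which with $r=\Delta/2$ gives $\Pr[\text{cut}]=O(\sqrt d\,\eps/\Delta)$ as required. It is this ratio of a $(d-1)$-dimensional ball volume to a $d$-dimensional one---genuinely using Euclidean rather than doubling structure, exactly as you suspected---that supplies the missing factor. Until this computation (or an equivalent) appears, the proposal restates the difficulty rather than resolving it.
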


Let $G_{\cal S}$ be the graph with vertex set $V(G_{\cal S})=X$, and edge set $E(G_{\cal S}) = {\cal S}$.
We use $\rho_{\cal S}$ to denote the shortest-path distance of $G_{\cal S}$, where the length of each edge is set to $u$.

\begin{lemma}\label{lem:exact_diam}
Let $X'\subset X$, such that the subgraph of $G_{\cal S}$ induced on $X'$ (i.e.~$G_{\cal S}[X']$) is connected.
Suppose further that $\diam(f^*(X'))\leq \Delta$, for some $\Delta>0$.
Then, for any $p,q\in X'$, we have
%\[
$\rho_{\cal S}(p,q) \leq 2 u (1+4\Delta/u)^{d}$.
%\]
\end{lemma}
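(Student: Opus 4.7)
The approach is to recast the claim as a hop-diameter bound on $G_{\cal S}[X']$ and then apply a packing argument in the Euclidean image of $f^*$. Because $\rho_{\cal S}$ assigns length $u$ to every edge of $G_{\cal S}$, we have $\rho_{\cal S}(p,q) = u \cdot h(p,q)$, where $h(p,q)$ denotes the hop-distance (number of edges on a shortest path) in $G_{\cal S}[X']$; since $G_{\cal S}[X']$ is connected by hypothesis, such a path exists, and it suffices to show that $h(p,q) \leq 2(1+4\Delta/u)^d$.

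I would fix any shortest hop-path $p = v_0, v_1, \ldots, v_h = q$ in $G_{\cal S}[X']$ and exploit its shortest-path property: whenever $|i-j| \geq 2$, the pair $\{v_i,v_j\}$ cannot be an edge of $G_{\cal S}$, as otherwise the path could be shortcut. The complete-information hypothesis ${\cal S}\cup{\cal D}=\binom{X}{2}$ then forces $\{v_i,v_j\}\in{\cal D}$, and since $f^*$ is a $1$-embedding of accuracy $1$, we obtain $\|f^*(v_i)-f^*(v_j)\| \geq \ell$. Consequently the even-indexed vertices $v_0,v_2,v_4,\ldots$ and, separately, the odd-indexed vertices $v_1,v_3,v_5,\ldots$ each form an $\ell$-separated subset of $f^*(X')$, which by assumption lies in a Euclidean ball of diameter $\Delta$.

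A standard volumetric packing bound in $\mathbb{R}^d$ caps the cardinality of any $\ell$-separated subset of a diameter-$\Delta$ ball by $(1+O(\Delta/\ell))^d$; in the regime of interest one has $u\leq\ell$ (the natural setting for the contrastive promise to be meaningful), so this estimate is dominated by $(1+4\Delta/u)^d$. Summing over the two parity classes yields $h(p,q)+1 \leq 2(1+4\Delta/u)^d$, and multiplying by $u$ gives the lemma.

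The main subtlety lies in calibrating the constant $4$ in the denominator, since the natural packing bound carries $\ell$ rather than $u$; an alternative, equivalent formulation that yields the stated constant directly covers $f^*(X')$ by at most $(1+4\Delta/u)^d$ balls of radius $u/4$, observes that each preimage in $X'$ is a clique of $G_{\cal S}[X']$ (using $u/2<\ell$), and then chains these cliques along a path provided by the connectivity of $G_{\cal S}[X']$. Either route reduces the geometric content of the lemma to a routine doubling/packing estimate, with the combinatorial core being the shortcut argument plus complete information.
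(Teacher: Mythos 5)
Your proof is correct and follows essentially the same route as the paper: fix a shortest hop-path in $G_{\cal S}[X']$, use complete information plus the shortest-path property to rule out edges between non-consecutive vertices, deduce that a parity class of the path vertices is well-separated under $f^*$, and finish with a volume-packing bound in the diameter-$\Delta$ ball. The paper works directly with separation $>u$ (packing radius-$u/2$ balls around the even-indexed vertices), which lands on the constant $4$ without the detour through $\ell$ and the $u\leq\ell$ substitution you describe, but that is only a cosmetic difference.
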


\begin{proof}
Let $G'=G_{\cal S}[X']$.
Let $Q=x_1,\ldots,x_L$ be the shortest path between $p$ and $q$ in $G'$, with $x_1=p$, $x_L=q$.
We first claim that for all $i,j\in \{1,\ldots,L\}$, with $i<j-1$, we have that 
\begin{align}
\|f^*(x_i), f^*(x_j)\|_2 > u. \label{eq:exact_ij}
\end{align}
Indeed, note that if $\|f^*(x_i), f^*(x_j)\|_2 \leq u$, then since $\gamma$ has full information, i.e.~${\cal S}\cup {\cal D}=\binom{X}{2}$, it must be that $\{x_i,x_j\}\in {\cal S}$, and thus the edge $\{x_i,x_j\}$ is present in the graph $G'$.
This implies that $\rho_{G'}(x_i,x_j)=u$, which contradicts the fact that $Q$ is a shortest path.
We have thus established \eqref{eq:exact_ij}.

For each $i\in \{1,\ldots,L\}$, let $B_i$ be the ball in $\mathbb{R}^d$ centered at $f^*(x_i)$ and of radius $u/2$.
By \eqref{eq:exact_ij} we get that the balls $B_2,B_4,\ldots,B_{2\lfloor L/2 \rfloor}$ are pairwise disjoint.
Since $\diam(f^*(X')) \leq \Delta$, it follows that there exists some $x^*\in \mathbb{R}^d$, such that 
$f^*(X') \subset \ball(x^*, 2\Delta)$.
Therefore
\begin{align}
\bigcup_{i=1}^{\lfloor L/2\rfloor} B_{2i} \subseteq \bigcup_{i=1}^{\lfloor L/2\rfloor} \ball(x_{2i}, u/2) \subseteq \ball(x^*, 2\Delta+u/2). \label{eq:exact_balls}
\end{align}
Recall that the volume of the ball of radius $r$ in $\mathbb{R}^d$ is equal to $V_d(r)=\frac{\pi^{d/2}}{\Gamma(1+d/2)}r^d$.
From \eqref{eq:exact_balls} we get
%\begin{align*}
$\lfloor L/2 \rfloor \cdot V_d(u/2) \leq V_d(2\Delta+u/2)$,
%\end{align*}
and thus
%\begin{align*}
$\rho_{\cal S}(p,q) \leq \rho_{G'}(p,q) = u \cdot (L-1) \leq 2 u  \frac{V_d(2\Delta+u/2)}{V_d(u/2)}= 2 u (1+4\Delta/u)^d$,
%\end{align*}
which concludes the proof.
\end{proof}

We now show that the shortest-path metric of the similarity constraint graph admits a Lipschitz distribution.

\begin{lemma}\label{lem:exact_Lip}
For any $\Delta>0$,
the metric space $(X,\rho_{\cal S})$ admits a $(O(\sqrt{d}/\Delta), u(1+4\Delta/u)^d)$-Lipschitz distribution.
\end{lemma}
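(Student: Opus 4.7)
The plan is to pull back the Euclidean Lipschitz distribution from $\mathbb{R}^d$ (Lemma~\ref{lem:Rd-decompose}) through the optimal embedding $f^*$, and then refine the resulting partition so that every cluster is connected in $G_{\cal S}$, which will let us invoke Lemma~\ref{lem:exact_diam} for the diameter bound.

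First I would observe that $f^*:X\to \mathbb{R}^d$ is $1$-Lipschitz from $(X,\rho_{\cal S})$ to $\mathbb{R}^d$. Indeed, since $f^*$ is a $1$-embedding with accuracy $1$ under perfect information, every $\{x,y\}\in {\cal S}$ satisfies $\|f^*(x)-f^*(y)\|_2\le u$, so for any shortest path $x_0,x_1,\ldots,x_L$ in $G_{\cal S}$ the triangle inequality gives $\|f^*(x_0)-f^*(x_L)\|_2 \le uL = \rho_{\cal S}(x_0,x_L)$.

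Next I would draw a partition $P^{\mathbb{R}^d}$ of $\mathbb{R}^d$ from the $(O(\sqrt{d}/\Delta),\Delta)$-Lipschitz distribution provided by Lemma~\ref{lem:Rd-decompose}, and define an intermediate partition $P^0$ of $X$ by $P^0(x) = (f^*)^{-1}(P^{\mathbb{R}^d}(f^*(x)))$. Then I would refine $P^0$ into $P$ by splitting each cluster $C\in P^0$ into the connected components of the induced subgraph $G_{\cal S}[C]$. This refinement is crucial because it gives each final cluster the hypothesis needed by Lemma~\ref{lem:exact_diam}: the cluster is connected in $G_{\cal S}$ and its $f^*$-image lies inside some part of $P^{\mathbb{R}^d}$, hence has Euclidean diameter at most $\Delta$. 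Lemma~\ref{lem:exact_diam} then yields $\rho_{\cal S}$-diameter at most $2u(1+4\Delta/u)^d$, which matches (up to the absorbed constant) the diameter promised in the statement.

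For the Lipschitz bound, fix $p,q\in X$ and a shortest $G_{\cal S}$-path $p=x_0,x_1,\ldots,x_L=q$ of length $\rho_{\cal S}(p,q)=uL$. The key point is that if every consecutive pair $x_i,x_{i+1}$ lies in the same cluster of $P^0$, then since $\{x_i,x_{i+1}\}\in {\cal S}$ is an edge of $G_{\cal S}$, all of $x_0,\ldots,x_L$ lie in one connected component of $G_{\cal S}$ restricted to that chain of clusters, hence $P(p)=P(q)$. Contrapositively,
\begin{align*}
\Pr[P(p)\ne P(q)] &\le \sum_{i=0}^{L-1} \Pr[P^0(x_i)\ne P^0(x_{i+1})] \\
 &\le \sum_{i=0}^{L-1} O(\sqrt{d}/\Delta)\cdot \|f^*(x_i)-f^*(x_{i+1})\|_2 \\
 &\le O(\sqrt{d}/\Delta)\cdot uL = O(\sqrt{d}/\Delta)\cdot \rho_{\cal S}(p,q),
\end{align*}
using that $P^0$ inherits the Lipschitz property of $P^{\mathbb{R}^d}$ through the $1$-Lipschitz map $f^*$.

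The only subtlety I anticipate is making sure the refinement step does not inflate the separation probability, which is handled by the path argument above; once that is in place, the diameter and Lipschitz bounds fall out of Lemmas~\ref{lem:exact_diam} and~\ref{lem:Rd-decompose} respectively.
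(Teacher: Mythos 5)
Your proof is correct and follows essentially the same route as the paper's: pull back the Euclidean Lipschitz partition through $f^*$, take connected components in $G_{\cal S}$ to restore the connectivity hypothesis of Lemma~\ref{lem:exact_diam}, and use a union bound along a shortest $G_{\cal S}$-path for the separation probability. You also correctly flagged the factor of $2$ coming from Lemma~\ref{lem:exact_diam}, which the paper's own statement of Lemma~\ref{lem:exact_Lip} silently drops.
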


\begin{proof}
Let $f^*:X\to \mathbb{R}^d$ be some mapping with accuracy $1$.
By Lemma \ref{lem:Rd-decompose}
there exist some $(O(\sqrt{d}/\Delta), \Delta)$-Lipschitz distribution, ${\cal P}$, of $(\mathbb{R}^d, \|\cdot\|_2)$.
Let $P$ be a random partition sampled from ${\cal P}$.
We define a partition $P'$ of $X$ as follows.
Let $G_{\cal S}'$ be the graph obtained from $G_{\cal S}$ by deleting all edges whose endpoints under $f^*$ are in different clusters of $P$.
That is, $V(G_{\cal S}')=X$, and 
%\[
$E(G_{\cal S}') = \{\{p,q\} \in {\cal S} : P(f^*(p))=P(f^*(q))\}$,
%\]
where $P(p)$ denotes the cluster of $P$ containing $p$.
We define $P'$ to be the set of connected components of $G_{\cal S}'$.
Let ${\cal P}'$ be the induced distribution over partitions of $(X,\rho_{\cal S})$.
It remains to show that ${\cal P}'$ is $(O(\sqrt{d}/\Delta), u(1+4\Delta/u)^d)$-Lipschitz.

Let us first bound the probability of separation.
Let $p,q\in X$.
Let $x_1,\ldots,x_t$, with $x_1=p$, $x_t=q$, be a shortest-path in $G_{\cal S}$ between $p$ and $q$.
Since ${\cal P}$ is $(O(\sqrt{d}/\Delta), \Delta)$-Lipschitz,
 it follows that 
%\begin{align*}
$\Pr[f^*(p)\neq f^*(q)] \leq \frac{O(\sqrt{d})}{\Delta} \|f^*(p)-f^*(q)\|_2 \leq \frac{O(\sqrt{d})}{\Delta} \sum_{i=1}^{t-1} \|f^*(x_i)-f^*(x_{i+1})\|_2 \leq \frac{O(\sqrt{d})}{\Delta} \sum_{i=1}^{t-1} \rho_S(x_i,x_{i+1}) = \frac{O(\sqrt{d})}{\Delta} \rho_{\cal S}(p,q)$.
%\end{align*}

Finally, let us bound the diameter of the clusters in $P'$.
Let $C$ be a cluster in $P'$.
By construction, $\diam(f^*(C))\leq \Delta$, and $G_{\cal S}[C]$ is a connected subgraph, thus by Lemma \ref{lem:exact_diam} we obtain that the diameter of $C$ in the metric space $(X,\rho)$ is at most $u(1+4\Delta/u)^d$.
Thus ${\cal P}'$ is $(O(\sqrt{d}/\Delta), u(1+4\Delta/u)^d)$-Lipschitz, concluding the proof.
\end{proof}

The following result allows us to sample from a Lipschitz distribution, without additional information about the geometry of the underlying space.

\begin{lemma}[\cite{krauthgamer2011metric}]\label{lem:Lip_approx}
Let $M=(X, \rho)$ be a metric space, and let $\Delta>0$.
Suppose that $M$ admits a $(\beta, \Delta)$-Lipschitz distribution, for some $\beta>0$.
Then there exists a randomized polynomial-time algorithm, which given $M$ and $\Delta$, outputs a random partition $P$ of $X$, such that $P$ is sampled from a $(2\beta, \Delta)$-Lipschitz distribution ${\cal P}$.
\end{lemma}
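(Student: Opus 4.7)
The plan is to give a concrete polynomial-time region-growing procedure and invoke the hypothetical $(\beta,\Delta)$-Lipschitz distribution only in the analysis, not in the algorithm itself. Concretely, I would sample a uniformly random permutation $\pi$ of $X$ together with a random radius $R$ drawn from a distribution on $[0,\Delta/2]$ to be tuned to the metric $M$. Scanning the points in the order prescribed by $\pi$, I would form the $i$-th cluster by assigning every not-yet-assigned point inside $\ball(\pi(i),R)$ to it. Every cluster then trivially sits inside a ball of radius $R\leq \Delta/2$, so the output partition is $\Delta$-bounded with probability one, satisfying the first Lipschitz condition by construction. Only the metric $M$ and $\Delta$ are used in this sampling step.

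The remaining task is to bound the separation probability $\Pr[P(x)\neq P(y)]$ for an arbitrary pair $x,y\in X$ at distance $\delta=\rho(x,y)$. A standard calculation, decomposing the event according to the first point $v$ (in the order $\pi$) whose ball $\ball(v,R)$ hits $\{x,y\}$, rewrites this probability as an integral over $r$ of a quantity controlled by the ball-growth functions $r\mapsto |\ball(x,r)|$ and $r\mapsto |\ball(y,r)|$. This is where the existence hypothesis enters, and I expect it to be the main obstacle: I plan to argue that if a thin shell around $x$ contained too many points, no $\Delta$-bounded random partition could achieve pairwise separation rate $\beta$, contradicting the existence of $\mathcal{P}^*$. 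By choosing the density of $R$ proportional to the (negative) logarithmic derivative of the ball-growth function, suitably truncated on $[0,\Delta/2]$, the resulting integral should telescope to a bound of the form $O(\beta\delta)$, and a careful bookkeeping of the constants should yield precisely the claimed $2\beta\delta$.

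If the density-style argument proves difficult to tighten to the constant $2$, a robust fallback would be to formulate the set of $\Delta$-bounded partitions as a (possibly exponentially large) feasibility LP in a single ``rate'' variable constraining all pairwise separation probabilities simultaneously, and to solve it via a separation-oracle approach whose oracle is essentially the region-growing primitive above. Either route uses only $M$ and $\Delta$ as input, with the existence of the unknown $(\beta,\Delta)$-Lipschitz distribution appearing solely in the correctness proof. Wrapping the sampler inside the standard repetition-and-verification template yields the desired randomized polynomial-time algorithm drawing from a $(2\beta,\Delta)$-Lipschitz distribution.
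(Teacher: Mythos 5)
The paper does not prove this lemma; it cites it as a black-box result from Krauthgamer and Roughgarden~\cite{krauthgamer2011metric}. Their argument is fundamentally LP-based: they formulate a linear program (over ball-shaped fractional clusters, with covering constraints) whose optimum lower-bounds the best achievable separation rate, solve it in polynomial time, and then round the fractional solution with a CKR-style random-permutation/random-radius procedure. The factor of $2$ is the loss incurred in this rounding step, measured against the LP optimum. Your ``fallback'' sketch is thus pointing in the right direction, but it is the \emph{primary} approach, and you leave it entirely undeveloped.

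Your primary proposal has a concrete gap. You run CKR-style region growing with a radius density tuned to the ball-growth function $r\mapsto|\ball(x,r)|$, and you claim that the existence of a $(\beta,\Delta)$-Lipschitz distribution forces the ball-growth profile to be benign enough (``if a thin shell around $x$ contained too many points, no $\Delta$-bounded random partition could achieve rate $\beta$'') that the telescoping integral yields $O(\beta\delta)$, indeed $2\beta\delta$. This inference does not hold. Consider $n$ points pairwise at distance exactly $1$ with $\Delta=2$: the whole space fits in one cluster, so $\beta=0$, yet every shell of radius slightly below $1$ contains $\Omega(n)$ points. The existence of a cheap decomposition therefore imposes essentially no constraint on local ball growth, and the standard CKR analysis would give a bound of order $(\delta/\Delta)\log n$ here, not a constant multiple of $\beta\delta$. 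The missing ingredient is precisely the LP: it is the LP lower bound, not the geometry of balls, that certifies the separation rate of the rounded partition is within a constant of optimal. Without writing down and analyzing that LP, the proof is incomplete.
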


\subsection{The algorithm: Combining Lipschitz and pseudoregular partitions}

We now present our final algorithm for learning Euclidean metric spaces with perfect information.
First, we show that, using the algorithm from Section \ref{sec:pseudoregular}, we can learn $d$-dimensional Euclidean metric spaces of bounded diameter.
The final algorithm combines this result with an efficient procedure for decomposing the instance using a random Lipschitz partition.

\begin{lemma}[Computing Euclidean embeddings of small diameter]\label{lem:pseudoregular_embedding_euclidean}
Let $\eps, \eps'>0$.
Let $\gamma_C=(C, {\cal S}, {\cal D}, u, \ell)$, with $|C|=n$, be an instance of the metric learning problem.
Suppose that $\gamma_C$ admits an embedding $g^*:C\to \mathbb{R}^d$ with accuracy $r^*$, such that $\diam(g^*(C)) \leq \Delta$, for some $\Delta>0$.
Then, there exists an algorithm which given $\gamma_C$ and $\Delta$, computes a $(1+\eps')$-embedding $\widehat{g}:C\to\mathbb{R}^d$, with accuracy at least $r^*-\eps$.
Furthermore for any fixed $u$ and $\ell$, the running time is $T(n, d, \Delta, \eps, \eps') = 2^{\eps^{-2} (\Delta \sqrt{d}/\eps')^{O(d)}}$.
\end{lemma}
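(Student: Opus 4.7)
\medskip

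The plan is to reduce to Lemma \ref{lem:pseudo_embed} by replacing the continuous target space $\mathbb{R}^d$ with a carefully chosen finite net $N$ inside a ball of diameter $\Delta$. After translating $g^*$ (which does not affect pairwise distances and so preserves accuracy), we may assume that $g^*(C)$ lies inside the Euclidean ball $B \subset \mathbb{R}^d$ of radius $\Delta$ centered at the origin. I then take $N$ to be the axis-aligned grid of side length $s = c \cdot \eps' \min(u,\ell)/\sqrt{d}$, for a sufficiently small absolute constant $c>0$, intersected with $B$. By construction, every point of $B$ is within Euclidean distance $s\sqrt{d}/2 \leq \eps' \min(u,\ell)/4$ of its nearest point in $N$, and the cardinality satisfies $|N| = O(\Delta/s)^d = (\Delta \sqrt{d}/\eps')^{O(d)}$, treating $u,\ell$ as constants.

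Next, I would verify that the rounded map $g^{**}:C\to N$ obtained by sending each $x \in C$ to the point of $N$ closest to $g^*(x)$ is a $(1+\eps')$-embedding into $(N, \|\cdot\|_2)$ with accuracy at least $r^*$. Indeed, the triangle inequality yields $\bigl|\|g^{**}(x)-g^{**}(y)\| - \|g^*(x)-g^*(y)\|\bigr| \leq s\sqrt{d} \leq \eps'\min(u,\ell)/2$ for every pair $\{x,y\}$. Consequently, every similarity constraint satisfied by $g^*$ (so $\|g^*(x)-g^*(y)\|\leq u$) gives $\|g^{**}(x)-g^{**}(y)\|\leq u + u\eps'/2 \leq u(1+\eps')$, and every dissimilarity constraint satisfied by $g^*$ (so $\|g^*(x)-g^*(y)\|\geq \ell$) gives $\|g^{**}(x)-g^{**}(y)\|\geq \ell - \ell\eps'/2 \geq \ell/(1+\eps')$, the last inequality valid for $\eps'\in(0,1]$. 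Hence $g^{**}$ certifies the hypothesis of Lemma \ref{lem:pseudo_embed} on the instance $\gamma_C$ with host space $M=(N,\|\cdot\|_2)$.

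Finally, I invoke Lemma \ref{lem:pseudo_embed} on $(\gamma_C, M)$ with parameters $\eps$ and $\eps'$, producing a $(1+\eps')$-embedding $\widehat{g}:C\to N \subset \mathbb{R}^d$ with accuracy at least $r^*-\eps$ in time $n^{O(1)} 2^{O(|N|^5/\eps^2)} = n^{O(1)} 2^{\eps^{-2}(\Delta\sqrt{d}/\eps')^{O(d)}}$, matching the claimed bound $T(n,d,\Delta,\eps,\eps')$. The only subtle point, and the main place where care is needed, is calibrating the grid spacing $s$ so that the rounding simultaneously (i) keeps the contrastive distortion at exactly $1+\eps'$ despite the fact that small additive perturbations of distances interact differently with the multiplicative bounds $u(1+\eps')$ and $\ell/(1+\eps')$, and (ii) keeps $|N|$ at $(\Delta\sqrt{d}/\eps')^{O(d)}$ so that the $|N|^5$ in the Lemma \ref{lem:pseudo_embed} running time does not blow up. Once the constant $c$ in the definition of $s$ is chosen, everything else is a direct application of the previously developed machinery.
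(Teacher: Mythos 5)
Your proposal is correct and takes essentially the same approach as the paper: both discretize the ball containing $g^*(C)$ by a grid of spacing $\Theta(\eps'\min(u,\ell)/\sqrt{d})$, round $g^*$ to the nearest grid point to obtain a $(1+\eps')$-embedding into a finite set $N$ with $|N|=(\Delta\sqrt{d}/\eps')^{O(d)}$, and then invoke Lemma~\ref{lem:pseudo_embed}. The paper uses the grid $c_1\mathbb{Z}^d$ with $c_1=\frac{\eps'}{2\sqrt{d}}\min\{\ell,u\}$ intersected with a ball of radius $2\Delta$, which is the same construction up to the choice of constants.
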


\begin{proof}
Let
$c_1=\frac{\eps'}{2\sqrt{d}}\min\{\ell,u\}$.
Since $\diam(g^*(C))\leq \Delta$, it follows that there exists some ball $B\subset \mathbb{R}^d$ of radius $2\Delta$ such that $g^*(C)\subset B$.
Let $c_1\mathbb{Z}^d$ denote the integer lattice scaled by a factor of $c_1$.
Let $N=B\cap (c_1\mathbb{Z}^d)$.
Note that $|N| \leq (2\Delta/c_1)^{d}$.
Let $g':C\to \mathbb{R}^d$ be defined such that for each $p\in C$, we have that $g'(p)$ is a nearest neighbor of $g^*(p)$ in $N$, breaking ties arbitrarily.
For any $\{p,q\}\in {\cal S}\cap \binom{C}{2}$, we have
%\begin{align*}
$\|g'(p)-g'(q)\|_2 \leq \|g'(p)-g^*(p)\|_2 + \|g^*(p)-g^*(q)\|_2 + \|g^*(q)-g'(q)\|_2\\
 \leq u + \sqrt{d} c_1 \leq u(1+\eps')$.
%\end{align*}
%\marginpar{$\sqrt{d} c_1 \leq u\eps'$}
Similarly, for any $\{p,q\}\in {\cal D}\cap \binom{C}{2}$, we have
%\begin{align*}
$\|g'(p)-g'(q)\|_2 \geq -\|g'(p)-g^*(p)\|_2 + \|g^*(p)-g^*(q)\|_2 - \|g^*(q)-g'(q)\|_2\\
 \geq \ell - \sqrt{d} c_1 \leq \ell/(1+\eps')$,
%\end{align*}
%\marginpar{$\ell-\sqrt{d}c_1 \leq \ell/(1+\eps')$}
and thus $g'$ is a $(1+\eps')$-embedding, with accuracy $r^*$.

Let $M=(N,\|\cdot\|_2)$; that is, the metric space on $N$ endowed with the Euclidean distance.
By Lemma \ref{lem:pseudo_embed} we can therefore compute a $(1+\eps')$-embedding $\widehat{g}:C \to N$, with accuracy $r^*-\eps$, in time 
$n^{O(1)} 2^{O(|N|^5/\eps^2)} = 2^{\eps^{-2} (\Delta \sqrt{d}/\eps')^{O(d)}}$.
\end{proof}

We are now ready to prove the main result in this Section.

\begin{proof}[Proof of Theorem \ref{thm:perfect_info}]
We first construct the graph $G_{\cal S}$ as above.
Let $\Delta=c \sqrt{d} u / \eps$, for some universal constant $c>0$.
By Lemma \ref{lem:exact_Lip} the metric space $(X, \rho_{\cal S})$ admits some $(O(\sqrt{d}/\Delta), \Delta')$-Lipschitz distribution, for some $\Delta'=u(1+4\Delta/u)^d$.
By Lemma \ref{lem:Lip_approx} we can compute, in polynomial time, some random partition $P$ of $X$, such that $P$ is distributed according to some $(O(\sqrt{d}/\Delta), \Delta')$-Lipschitz distribution ${\cal P}$.

Let ${\cal S}'=\{\{p,q\}\in {\cal S} : P(p)\neq P(q)\}$.
Since ${\cal P}$ is $(O(\sqrt{d}/\Delta), \Delta')$-Lipschitz, it follows by the linearity of expectation that 
\begin{align}
\mathbb{E}[|{\cal S}'|] &= \sum_{\{p,q\}\in {\cal S}} \Pr[P(p)\neq P(q)] \leq \sum_{\{p,q\}\in {\cal S}} O(\sqrt{d}) \frac{\rho_{\cal S}(p,q)}{\Delta} = O(\sqrt{d}) \frac{u}{\Delta} |{\cal S}| \leq \eps |{\cal S}|/4, \label{eq:S_prime}
\end{align}
where the last inequality holds for some large enough constant $c>0$.

Fix some optimal embedding $f^*:X\to \mathbb{R}^d$, that satisfies all the constraints in $\gamma$.
Let $C$ be a cluster in $P$.
%, and let $G'=G_{\cal S}[C]$.
Since $P$ is $\Delta'$-bounded, it follows that the diameter of $C$ in $(X,\rho_{\cal S})$ is at most $\Delta'$.
Thus, for any $p,q\in C$, there exists in $G_{\cal S}$ some path $Q=x_0,\ldots,x_L$ between $p$ and $q$, with $L\leq \Delta'/u$ edges. 
Thus
%\begin{align*}
$\|f^*(p)-f^*(q)\|_2 \leq \sum_{i=0}^{L-1} \|f^*(x_i)-f^*(x_{i+1})\|_2 \leq L\cdot u \leq \Delta'$,
%\end{align*}
which implies that $\diam(f^*(C)) \leq \Delta'$.
Furthermore $f^*$ has accuracy 1.

Let ${\cal S}_C={\cal S}\cap \binom{C}{2}$, ${\cal D}_C={\cal D}\cap \binom{C}{2}$.
Then $\gamma_C=(C, {\cal S}_C, {\cal D}_C, u, u)$ is an instance of non-linear metric learning in $d$-dimensional Euclidean space that admits a solution $f^*:C\to \mathbb{R}^d$ with accuracy 1, with $\diam(f^*(C)) \leq \Delta'$.
Therefore, for each $C\in P$, using Lemma \ref{lem:pseudoregular_embedding_euclidean} we can compute some $(1+\eps')$-embedding $f_C:C\to \mathbb{R}^d$,
with accuracy at least $1-\eps/2$,
in time
$T(n, d, \Delta', \eps/2, \eps') = n^{O(1)} 2^{(\frac{d}{\eps \eps'})^{O(d^2)}}$.
We can now combine all these maps $f_C$ into a single map $\widehat{f}:X\to \mathbb{R}^d$, by translating each image $f_C(C)$ such that for any distinct $C,C'\in P$, for any $p\in C$, $p'\in C'$, we have $\|f_C(p)-f_{C'}(p')\|_2 \geq u$.
For any $p\in C$, we set $f(p)=f_C(p)$, where $C$ is the unique cluster in $P$ containing $p$.
It is immediate that any constraint $\{p,q\}\in {\cal S}$, with $p$ and $q$ in different clusters is violated by $f$.
The total number of such violations is $|{\cal S}'|$.
By Markov's inequality and \eqref{eq:S_prime}, these violations are at most $\eps|{\cal S}|/2$, with probability at least $1/2$.
Similarly, any $\{p,q\}\in {\cal D}$, with $p$ and $q$ in different clusters is satisfied by $f$.
All remaining violations are on constraints with both endpoints in the same cluster of $P$.
Since the accuracy of each $f_C$ is at least $1-\eps$, it follows that the total number of these violations is at most $\eps (|{\cal S}|+|{\cal D}|)/2$.
Therefore, the total number of violations among all constraints is at most $\eps (|{\cal S}|+|{\cal D}|)/2 + \eps |{\cal S}|/2 \leq \eps (|{\cal S}| + |{\cal D}|)$, with probability at least $1/2$.
In other words, the accuracy of $f$ is at least $1-\eps$, with probability at least $1/2$.
The success probability can be increased to $1-1/n^c$, for any constant $c>0$, by repeating the algorithm $O(\log n)$ times and returning the best embedding found.
Finally, the running time is dominated by the computation of the maps $f_C$, for all clusters $C$ in $P$.
Since there are at most $n$ such clusters, the total running time is $n^{O(1)} 2^{(\frac{d}{\eps \eps'})^{O(d^2)}}$, concluding the proof.
\end{proof}

\section{Learning Euclidean metric spaces with imperfect information}
\label{sec:euclidean_imperfect}

In this Section we describe our algorithm for learning  $d$-dimensional Euclidean metric spaces with imperfect information.
We first obtain some preliminary results on graph partitioning via sparse cuts.
We next show that for any instance whose similarity constraints induce an expander graph, the optimal solution must be contained inside a ball of small diameter.
Our final algorithm combines these results with the algorithm from Section \ref{sec:pseudoregular} for embedding into host metric spaces of bounded cardinality.

Fix some instance $\gamma=(X, {\cal S}, {\cal D}, u, \ell)$.
For the remainder of this Section we define the graphs $G_{\cal S}=(X, E_{\cal S})$, $G_{\cal D}=(X, E_{\cal D})$, and $G_{{\cal S}\cup {\cal D}}=(X, E_{{\cal S}\cup {\cal D}})$,
where $E_{\cal S}={\cal S}$, $E_{\cal D}={\cal D}$, and $E_{{\cal S}\cup {\cal D}}={\cal S}\cup {\cal D}$.
%For some ${\cal Y}\subset {\cal S}\cup {\cal D}$, and for some $U,U'\subset X$, we write ${\cal Y}[U, U']$ for the set of constraints in ${\cal Y}$ with one endpoint in $U$ and one in $U'$.

\subsection{Well-linked decompositions}

We recall some results on partitioning graphs into well-connected components.
An instance to the Sparsest-Cut problem consists of a graph $G$, with each edge $\{x,y\}\in E(G)$ having capacity $\capa(x,y)\geq 0$, and each pair $x,y\in V(G)$ having \emph{demand} $\demand(x,y)\geq 0$.
The goal is to find a cut $(U,\widebar{U})$ of minimum \emph{sparsity}, denoted by $\phi(U)$, which is defined as
$\phi(U) = \frac{\capa(U, \widebar{U})}{\demand(U, \widebar{U})}$,
where 
$\capa(U, \widebar{U}) = \sum_{\{x,y\}\in E(U, \widebar{U})} \capa(x,y)$,
$\demand(U, \widebar{U}) = \sum_{\{x,y\}\in E(U, \widebar{U})} \demand(x,y)$,
and $E(U,\widebar{U})$ denotes the set of edges between $U$ and $\widebar{U}$.
We recall the following result of Arora, Lee and Naor \cite{arora2008euclidean} on approximating the Sparsest-Cut problem (see also \cite{arora2009expander}).

\begin{theorem}[\cite{arora2008euclidean}]\label{thm:ALN}
There exists a polynomial-time $O(\sqrt{\log n} \log\log n)$-approximation for the Sparsest-Cut problem on $n$-vertex graphs.
\end{theorem}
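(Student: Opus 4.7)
The plan is to relax to a semidefinite program, embed the resulting negative-type metric into $\ell_1$, and round the $\ell_1$ embedding to a cut. First, I would write down the standard SDP relaxation for Non-uniform Sparsest-Cut: assign to every $x\in V(G)$ a vector $v_x\in\mathbb{R}^n$, treat the squared Euclidean distances $d(x,y)=\|v_x-v_y\|_2^2$ as surrogate distances, and impose the triangle inequality $d(x,y)\leq d(x,z)+d(z,y)$ for every triple (so that $d$ is an $\ell_2^2$, i.e.\ negative-type, pseudometric). Minimize $\sum_{\{x,y\}\in E(G)}\capa(x,y)\,d(x,y)$ subject to $\sum_{x,y\in V(G)}\demand(x,y)\,d(x,y)=1$. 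This SDP is solvable in polynomial time up to arbitrary accuracy, and its value is a lower bound on the optimal sparsity $\phi^*$.

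Second, and this is the heart of the argument, I would embed the negative-type metric $d$ obtained from the SDP into $\ell_1$ with distortion $D=O(\sqrt{\log n}\,\log\log n)$. I would not reprove this from scratch; it is the main theorem of Arora--Lee--Naor, obtained by combining the ARV structure theorem (any $\ell_2^2$ metric on $n$ points contains two sets of linear size at mutual distance $\Omega(1/\sqrt{\log n})$ times the average distance) with a measured-descent / padded-decomposition framework that aggregates contributions across all scales, paying only an extra $\log\log n$ factor relative to the uniform-demand case of ARV. The output is a map $\psi:X\to\ell_1^m$ for some $m=n^{O(1)}$ such that $d(x,y)/D\leq \|\psi(x)-\psi(y)\|_1\leq d(x,y)$ for all $x,y\in X$.

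Third, round $\psi$ to a cut. Any $\ell_1$ metric decomposes as a non-negative combination of cut semimetrics, so $\|\psi(x)-\psi(y)\|_1=\sum_i \lambda_i\,|\mathbf{1}_{x\in U_i}-\mathbf{1}_{y\in U_i}|$ for nonnegative weights $\lambda_i$ and cuts $U_i\subseteq X$. The sparsity of the $\ell_1$ embedding with respect to the original $(\capa,\demand)$ instance is then a convex combination of the sparsities $\phi(U_i)$, so an averaging argument produces some $U_i$ with $\phi(U_i)\leq D\cdot\phi^*$. Moreover, the $\lambda_i$ and $U_i$ can be extracted in polynomial time from $\psi$ by scanning threshold cuts along each of the $m$ coordinates, which is the classical constructive $\ell_1$-to-cut decomposition.

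The main obstacle is plainly the negative-type-to-$\ell_1$ embedding with distortion $O(\sqrt{\log n}\,\log\log n)$: assembling the SDP, solving it, performing the $\ell_1$-to-cut decomposition, and picking the best threshold cut are all routine, whereas the ALN embedding requires the full ARV structure theorem together with the multi-scale aggregation. Everything else in the pipeline fits together without difficulty, and the polynomial-time bound follows immediately once the embedding step is in place.
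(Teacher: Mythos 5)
The paper does not prove this statement at all; it is imported verbatim from Arora, Lee, and Naor \cite{arora2008euclidean} as a black box, so there is no in-paper proof to compare against. Your outline is a faithful high-level account of how that cited result is actually obtained (SDP relaxation with $\ell_2^2$ triangle inequalities, the ARV structure theorem plus measured descent to embed the negative-type metric into $\ell_1$ with distortion $O(\sqrt{\log n}\log\log n)$, then the standard constructive $\ell_1$-to-cut decomposition and averaging), and you correctly identify the embedding step as the only nontrivial ingredient; but since the paper simply cites the theorem, the expected answer here was the citation rather than a re-derivation.
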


\begin{lemma}\label{lem:well-linked}
Let $\alpha>0$.
There exists a polynomial-time algorithm which computes some $E'\subset {\cal S}\cup {\cal D}$, 
with $|E'|\leq \alpha |{\cal S}\cup {\cal D}|$, 
such that for every connected component $C$ of $G_{{\cal S}} \setminus E'$, 
and any $U\subset C$, 
we have
$|E_{{\cal S}}(U, C\setminus U)| = \Omega\left(\frac{\alpha }{\log^{3/2} n \log \log n}\right) |U| \cdot |C\setminus U|$.
\end{lemma}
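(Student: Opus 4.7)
The plan is to obtain $E'$ by iteratively invoking the ALN $O(\sqrt{\log n}\log\log n)$-approximation for Sparsest-Cut (Theorem \ref{thm:ALN}) to chop $G_{\cal S}$ into pieces that refuse to admit any sparse cut, charging each deleted edge against the demand pairs it separates. Concretely, I would set up a non-uniform Sparsest-Cut instance on $X$ whose capacities put weight one on every ${\cal S}$-edge, and whose demands put weight one on every pair in ${\cal S}\cup{\cal D}=\binom{X}{2}$. For this instance, the sparsity of a cut $(U,C\setminus U)$ sitting inside a component $C$ equals exactly $|E_{\cal S}(U,C\setminus U)|/(|U|\cdot|C\setminus U|)$, which is precisely the quantity the lemma wants lower-bounded.

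The algorithm maintains a partition ${\cal C}$ of $X$, starting from ${\cal C}=\{X\}$ and $E'=\emptyset$. For each current component $C\in{\cal C}$ I would run ALN on the induced sub-instance on $C$; if the returned cut has sparsity below some threshold $\phi_0=\Theta(\alpha)$, I would replace $C$ by the two sides of the cut and append its ${\cal S}$-crossing edges to $E'$. When no component admits such a cut, ALN certifies that the true minimum sparsity of each surviving component is at least $\phi_0/O(\sqrt{\log n}\log\log n)$; choosing $\phi_0=\alpha$ yields a well-linkedness parameter of order $\alpha/(\sqrt{\log n}\log\log n)$, which is in fact stronger than (and hence implies) the $\Omega(\alpha/(\log^{3/2}n\log\log n))$ bound asked for. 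Since each split strictly increases the number of components, at most $n-1$ splits occur, giving overall polynomial running time.

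The bound $|E'|\leq\alpha|{\cal S}\cup{\cal D}|$ will come from a standard charging argument: each split deletes fewer than $\phi_0|U|\cdot|C\setminus U|$ edges while separating exactly $|U|\cdot|C\setminus U|$ unit demand pairs, and once a demand pair is separated its endpoints lie in different components forever, so summing over all splits gives $|E'|\leq\phi_0\binom{n}{2}\leq\alpha|{\cal S}\cup{\cal D}|$. The one step I expect to need care is tuning the constant inside $\phi_0$ so that the edge budget and the well-linkedness guarantee both emerge with the advertised constants once the ALN approximation factor is folded in; the remainder is routine bookkeeping, and the fact that ${\cal S}\cup{\cal D}=\binom{X}{2}$ makes the demand-pair counting collapse neatly to $|U|\cdot|C\setminus U|$ at every scale.
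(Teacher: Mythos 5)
Your proposal is correct and matches the paper's proof in all essentials: both iteratively invoke the ALN $O(\sqrt{\log n}\log\log n)$-approximation for non-uniform Sparsest Cut with unit capacities on ${\cal S}$-edges and unit demands on all pairs (which here equal $\binom{X}{2}$), split any component whose returned cut is sparse, and bound $|E'|$ by charging the deleted ${\cal S}$-edges to the $|U|\cdot|C\setminus U|$ demand pairs separated by that split, each such pair being charged at most once. The only difference is a cosmetic retuning of the stopping threshold — you cap the returned cut's sparsity at $\Theta(\alpha)$, whereas the paper in effect caps it at $\Theta(\alpha/\log n)$ by placing its threshold $\chi = \alpha/(c\log^{3/2}n\log\log n)$ on the \emph{true} sparsity — so your version actually certifies the slightly stronger well-linkedness $\Omega(\alpha/(\sqrt{\log n}\log\log n))$ while still keeping $|E'|\le\alpha|{\cal S}\cup{\cal D}|$.
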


\begin{proof}
We compute $E'$ inductively starting with $E'=\emptyset$.
We construct an instance of the Sparsest-Cut problem on graph $G_{{\cal S}}$, where for any $\{x,y\}\in {\cal S}$ we have $\capa(x,y)=1$, and for any $x,y\in X$, we have 
$\demand(x,y) = 1$.
Let $\chi=\frac{\alpha}{c \log^{3/2} n \log \log n}$, for some universal constant $c>0$ to be specified.
If there exists a cut of sparsity at most $\chi$, then by Theorem \ref{thm:ALN} we can compute a cut $(U, \widebar{U})$ of sparsity  $O(\chi \sqrt{\log n}\log\log n)$.
We add to $E'$ all the edges in $E_{{\cal S}}(U, \widebar{U})$, and we recurse on the subgraphs $G_{{\cal S}}[U]$ and $G_{{\cal S}}[\widebar{U}]$.
If no cut with the desired sparsity exists, then we terminate the recursion.
For each cut $(U,\widebar{U})$ found, the edges in $E_{\cal S}(U,\widebar{U})$  that were added to $E'$ are charged to the edges in $E_{{\cal S} \cup {\cal D}}(U, \widebar{U})$.
In total, we charge only $O(\chi \sqrt{\log n}\log\log n)$-fraction of all edges, and thus
$|E'| = O(\chi n \log^{3/2} n \log\log n)\leq \alpha |{\cal S}\cup {\cal D}|$, where the last inequality follows for some sufficiently large constant $c>0$, concluding the proof.
\end{proof}

\subsection{Isoperimetry and the diameter of embeddings}

Here we show that if the graph $G_{\cal S}$ of similarity constraints is ``expanding'' relative to $G_{{\cal S}\cup {\cal D}}$, then there exists an embedding with near-optimal accuracy, that has an image of small diameter.
Technically, we require that in any cut in $G_{{\cal S}\cup {\cal D}}$, a significant fraction of its edges are in ${\cal S}$.
Intuitively, this condition forces almost all points to be mapped within a small ball.
The next Lemma formalizes this statement.

\begin{lemma}\label{lem:isoperimetry_graph}
Let $F\subset {\cal S}$, with $|F| \leq \zeta \binom{|X|}{2}$, for some $\zeta>0$.
Suppose that for all $U\subset X$, we have
\begin{align}
|E_{{\cal S}}(U, X\setminus U)| \geq \alpha'\cdot |U| \cdot |X\setminus U|, \label{eq:iso_mera}
\end{align}
for some $\alpha'>0$.
Then there exists $J\subseteq X$, satisfying the following conditions:
\begin{description}
\item{(1)}
No edge in $F$ has both endpoints in $J$; that is, $F\cap E_{\cal S}(J) = \emptyset$.

\item{(2)}
$|E_{{\cal S}\cup {\cal D}} (J) | \geq (1-\zeta/\alpha') \binom{|X|}{2}$.

\item{(3)}
For all $U\subset J$, we have
\begin{align}
|E_{\cal S}(U, J\setminus U)| = \Omega(\alpha') |U| \cdot |J\setminus U|.  \label{eq:CU} 
\end{align}
\end{description}
\end{lemma}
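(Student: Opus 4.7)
I will construct $J$ by an iterative peeling that removes small subsets of $X$ either blocking $F$-independence or forming a too-sparse cut, until no such obstruction remains. Initialize $J^{(0)} := X$ and $W^{(0)} := \emptyset$. At step $i \geq 0$, test whether (a) there is an $F$-edge with both endpoints in $J^{(i)}$, or (b) the induced graph $G_{\cal S}[J^{(i)}]$ admits a cut $(A, J^{(i)} \setminus A)$ with $|A| \leq |J^{(i)}|/2$ and sparsity $|E_{\cal S}(A, J^{(i)}\setminus A)| < c\alpha' |A|\,|J^{(i)}\setminus A|$, for a small universal constant $c>0$ to be fixed. If (a) triggers, remove one endpoint of such an $F$-edge into $W^{(i+1)}$; else if only (b) triggers, remove the smaller side $A$ into $W^{(i+1)}$. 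Otherwise the process terminates with $J := J^{(i)}$ and $W := X \setminus J$.

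Once the process halts, condition (1) is immediate, since (a) no longer fires. Condition (3) follows because $G_{\cal S}[J]$ admits no cut of sparsity below $c\alpha'$, which gives directly $|E_{\cal S}(U, J\setminus U)| \geq c\alpha'|U|\,|J\setminus U| = \Omega(\alpha')|U|\,|J\setminus U|$ for every $U \subset J$.

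For condition (2), the task reduces to bounding $|W|$ by $O(\zeta n/\alpha')$: since $\binom{n}{2} - \binom{|J|}{2} \leq n|W|$, one has $|E_{{\cal S}\cup{\cal D}}(J)| = \binom{|J|}{2} \geq \binom{n}{2} - n|W| \geq (1-\zeta/\alpha')\binom{n}{2}$ as soon as $|W| \leq \zeta n/(2\alpha')$. Each case-(a) removal I charge to a distinct $F$-edge, contributing at most $|F| \leq \zeta\binom{n}{2}$ removals; this alone is too crude, so the case-(b) removals must be controlled by \eqref{eq:iso_mera}. In the original graph the cut $(A, X\setminus A)$ carries at least $\alpha'|A|(n-|A|)$ similarity edges, while in $G_{\cal S}[J^{(i)}]$ fewer than $c\alpha'|A|\,|J^{(i)}\setminus A|$ do; the deficit $\Omega(\alpha'|A|\,|W^{(i)}|)$ must consist of $\cal S$-edges from $A$ to the previously removed set $W^{(i)}$. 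Summing these deficits across iterations, I intend to derive a recurrence of the form $\alpha'|W|\,n \lesssim (\text{total charges from } F\text{-edges}) \lesssim \zeta n^2$, yielding $|W| \lesssim \zeta n/\alpha'$ as required.

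The main obstacle will be the charging argument in case (b): I must show that the $\cal S$-edges from each peeled $A$ to $W^{(i)}$ are only charged a bounded number of times across the full execution. This calls for a carefully chosen monotone potential—for instance, tracking the accumulated quantity $\sum_i \alpha'|A_i|\,|J^{(i)}\setminus A_i|$ against the global expansion floor $\sum_i \alpha'|A_i|(n-|A_i|)$—together with an invariant that links the peeling history to the remaining expansion in $G_{\cal S}[J^{(i)}]$. Once this potential argument is in place, the three conditions follow from the termination criterion and the bound on $|W|$.
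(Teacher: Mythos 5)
Your peeling skeleton resembles the paper's, but there are two concrete gaps, and the second is where the proof actually needs a new idea.

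First, step~(a)---deleting an endpoint of every residual $F$-edge in order to enforce condition~(1)---cannot be made compatible with condition~(2). Take $G_{\cal S}$ to be the complete graph (so $\alpha'=1$) and $F$ a single edge: then $\zeta=1/\binom{n}{2}$, condition~(2) permits losing only $O(\zeta n^2/\alpha')=O(1)$ pairs from $E_{{\cal S}\cup{\cal D}}(J)$, yet any $J$ with $F\cap E_{\cal S}(J)=\emptyset$ must drop a vertex and hence at least $n-1$ pairs. The paper's own construction never removes an $F$-edge whose two endpoints stay in the same component of $G_{\cal S}\setminus F$, so it does not establish~(1) either; what Lemma~\ref{lem:isoperimetry} truly uses downstream is expansion of $G_{{\cal S}\setminus F}[J]$, and (1) as literally stated is overclaimed. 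Your instinct to excise $F$-edges vertex-by-vertex is therefore a step in the wrong direction, not just an unbounded one.

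Second, and more centrally, your step-(b) charging is left open, and the missing piece is the paper's choice of starting set. Rather than beginning from $J^{(0)}=X$, the paper first passes to the \emph{largest connected component} $C$ of $G_{\cal S}\setminus F$. The payoff is that every $G_{\cal S}$-edge leaving $C$ is an $F$-edge, so the initial boundary $B_0:=E_{\cal S}(C,X\setminus C)$ has $|B_0|\leq|F|$. When a sparse cut $U$ (w.l.o.g.\ the smaller side) is peeled, \eqref{eq:iso_mera} forces $|E_{\cal S}(U,X\setminus U)|\geq\alpha'|U|\,n/2$, while the sparsity threshold caps the retained part $|E_{\cal S}(U,C^{(i)}\setminus U)|$ by $c'\alpha'|U|\,n$; for $c'$ small enough, at least two thirds of $U$'s outgoing edges land in the already-removed region. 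So the boundary loses $b_i$ edges and gains $a_i\leq b_i/2$, and telescoping gives $\sum_i b_i\leq 2|F|$. Since $b_i=\Omega(\alpha'|U_i|n)$ this yields $\sum_i|U_i|=O(|F|/(\alpha' n))=O(\zeta n/\alpha')$, exactly the bound on $|X\setminus J|$ that you need for~(2). In your version the initial boundary is empty and every case-(a) removal injects up to $n-1$ fresh boundary edges with no decrease guarantee, so the ``monotone potential'' you intend to track has no reason to telescope. Replace step~(a) with the pass-to-largest-component move, state~(3) for $G_{{\cal S}\setminus F}$ and drop~(1), and the boundary-shrinkage argument above closes the proof.
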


\begin{proof}
Let $C$ be the largest connected component of $G_{\cal S}\setminus F$, and let $C'=X\setminus C$.
By \eqref{eq:iso_mera} we get
\begin{align}
|E_{{\cal S}\cup {\cal D}}(C, C')| &= |C| \cdot |C'| = O\left(1/\alpha'\right) |E_{{\cal S}}(C, C')| = O\left(1/\alpha'\right) |F| = O\left( \zeta / \alpha'\right) |X|^2. \label{eq:nyxta1}
\end{align}
Since $G_{{\cal S}\cup {\cal D}}$ is a complete graph, it follows from \eqref{eq:nyxta1} that the number of edges not in $G_{{\cal S}\cup {\cal D}}[C]$ is at most $O(\zeta/\alpha') |X|^2$, and therefore
\begin{align}
|E_{{\cal S}\cup {\cal D}}(C)| &\geq (1-O(\zeta/\alpha')) \binom{|X|}{2}. \label{eq:C_big_mera}
\end{align}

Next, we compute some $J\subseteq C$ such that for all $U\subset C$, 
\begin{align}
E_{\cal S}(U, J\setminus U) \geq c' \alpha' \cdot |U|\cdot |J\setminus U|, \label{eq:mera1}
\end{align}
for some universal constant $c'$ to be determined.
This is done as follows.
If there is no $U\subset C$ violating \eqref{eq:mera1}, then we set $J=C$.
Otherwise, pick some $U\subset C$, such that
\begin{align}
E_{\cal S}(U, C\setminus U) < c' \alpha' \cdot |U|\cdot |C\setminus U|. \label{eq:mera2}
\end{align}

Assume w.l.o.g.~that $|U|\leq |C\setminus U|$.
We delete $U$ and we recurse on $C\setminus U$.
All deleted edges are charged to the edges in $F$ that are incident to $U$.
It follows by \eqref{eq:mera2} and \eqref{eq:iso_mera} that, for some sufficiently small constant $c'>0$, at least a $2/3$-fraction of the edges incident to $U$ are in $F$.
%\marginpar{$\CC{110} < \frac{1}{2} \Omega(\frac{\CC{100}}{\log^{3/2} n \log\log n})$}
Thus $|F\cap E_{\cal S}(U,X\setminus U)| = \Omega(\alpha') |U| \cdot |X\setminus U|$.
We charge the deleted edges (that is, the edges inside the deleted component $U$ and in $E_{\cal S}(U, X\setminus U)$) to the edges in $F\cap E_{\cal S}(U, X\setminus U)$, with each edge thus receiving $O(1/\alpha')$ units of charge.
When we recurse on $C\setminus U$, the part of $F$ that was incident to $U$ is replaced by $E_{\cal S}(U,C\setminus U)$, and thus it decreases in size by at least a factor of $2$.
Therefore, the total charge that we pay throughout the construction is at most $O(1/\alpha') |F| = O(\zeta/\alpha') |X|^2$, which is an upper bound on the number of all deleted edges.
This completes the construction of $J$.
Combining with \eqref{eq:C_big_mera} we get
%\begin{align*}
$|E_{{\cal S}\cup {\cal D}}(J)| \geq |E_{{\cal S}\cup {\cal D}}(C)| - O(\zeta/\alpha')|X|^2 \geq (1-O(\zeta/\alpha')) \binom{|X|}{2}$,
%\end{align*}
which concludes the proof.
\end{proof}

\begin{lemma}\label{lem:isoperimetry}
Let $M=(Y,\rho)$ be any metric space.
Suppose that $\gamma$ admits an embedding $f^*:X\to Y$ with accuracy $1-\zeta$, for some $\zeta>0$.
Suppose further that for all $U\subset X$, we have
\begin{align}
E_{{\cal S}}(U, X\setminus U) \geq \alpha' \cdot |U| \cdot |X\setminus U|, \label{eq:iso_embed}
\end{align}
for some $\alpha'>0$.
Then there exists an embedding $f':X\to Y$, with accuracy at least $1-O(\zeta/\alpha')$, such that $\diam_M(f'(X)) = O\left( \frac{u \log n}{\alpha'}\right)$.
\end{lemma}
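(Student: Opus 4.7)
Let $F\subseteq {\cal S}$ be the set of similarity constraints violated by $f^*$. Since $f^*$ has accuracy $1-\zeta$, we have $|F|\leq \zeta\binom{|X|}{2}$. The plan is to apply Lemma \ref{lem:isoperimetry_graph} with this $F$ and the given expansion parameter $\alpha'$ to extract a subset $J\subseteq X$ on which (i) $f^*$ satisfies every similarity constraint, (ii) all but an $O(\zeta/\alpha')$-fraction of the constraints have both endpoints, and (iii) $G_{\cal S}[J]$ remains an edge expander with parameter $\Omega(\alpha')$. We then show that the image $f^*(J)$ has diameter $O(u\log n/\alpha')$, and finally define $f'$ by keeping $f^*$ on $J$ and collapsing $X\setminus J$ onto one of its points.

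The main step is bounding $\diam_M(f^*(J))$. Fix any vertex $v\in J$ and let $B_i$ denote the ball of radius $i$ around $v$ in $G_{\cal S}[J]$. Property (3) applied to $U=B_i$ gives
\[
|E_{\cal S}(B_i, J\setminus B_i)| = \Omega(\alpha')\,|B_i|\,|J\setminus B_i|.
\]
Since every vertex in $J\setminus B_i$ receives at most $|B_i|$ such edges, the number of vertices of $J\setminus B_i$ adjacent to $B_i$ is $\Omega(\alpha')\,|J\setminus B_i|$, so $|J\setminus B_{i+1}| \leq (1-\Omega(\alpha'))\,|J\setminus B_i|$. Hence after $R=O(\log n/\alpha')$ hops, $B_R=J$, i.e.\ $G_{\cal S}[J]$ has unweighted diameter $O(\log n/\alpha')$. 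Because property (1) guarantees that every edge of $G_{\cal S}[J]$ corresponds to a similarity constraint satisfied by $f^*$, each such edge has $\rho$-length at most $u$. Walking along any shortest path in $G_{\cal S}[J]$ and applying the triangle inequality yields $\diam_M(f^*(J)) \leq u\cdot R = O(u\log n /\alpha')$, as desired.

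Now define $f':X\to Y$ by $f'(p)=f^*(p)$ for $p\in J$ and $f'(p)=f^*(v)$ for $p\in X\setminus J$, where $v\in J$ is an arbitrary fixed vertex. Clearly $f'(X)\subseteq f^*(J)$, so the diameter bound carries over. To count violations, split ${\cal S}\cup{\cal D}$ into pairs with both endpoints in $J$ and pairs with at least one endpoint in $X\setminus J$. By property (2), the latter class has size at most $(\zeta/\alpha')\binom{|X|}{2}$, so it contributes at most $O(\zeta/\alpha')\binom{|X|}{2}$ violations regardless of how $f'$ behaves there. For the former class, $f'$ agrees with $f^*$, so the violations among these pairs are exactly the violations of $f^*$ restricted to $\binom{J}{2}$, which is bounded by the total violations of $f^*$, namely $\zeta\binom{|X|}{2}$. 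Adding the two contributions gives at most $O(\zeta/\alpha')\binom{|X|}{2}$ violations in total, so the accuracy of $f'$ is at least $1-O(\zeta/\alpha')$.

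The one place that requires care is the invocation of Lemma \ref{lem:isoperimetry_graph}: that lemma assumes \eqref{eq:iso_mera} for \emph{all} subsets of $X$, which is exactly the hypothesis \eqref{eq:iso_embed} we are given here, so the application is clean. Everything else reduces to the expander diameter argument above and the union bound on violations.
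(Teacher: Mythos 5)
Your proof is correct and follows essentially the same route as the paper: extract $J$ via Lemma~\ref{lem:isoperimetry_graph} with $F$ equal to the similarity constraints violated by $f^*$, keep $f^*$ on $J$ and collapse $X\setminus J$ to a point of $f^*(J)$, then bound the diameter via the combinatorial diameter of $G_{\cal S}[J]$ and the accuracy via properties (1) and (2). The only cosmetic difference is that you spell out the standard ball-growing argument showing $G_{\cal S}[J]$ has combinatorial diameter $O(\log n/\alpha')$, a step the paper asserts without proof.
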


\begin{proof}
Let $J\subseteq X$ be given by Lemma \ref{lem:isoperimetry_graph}, where we set $F$ to be the set of constraints in ${\cal S}$ that are violated by $f^*$.
We define $f':X\to Y$ by mapping each $x\in J$ to $f^*(x)$, and each $x\in X\setminus J$ to some arbitrary point in $f^*(J)$.
All constraints in $E_{\cal S}(J)$ are satisfied by $f^*$, and thus also by $f'$.
It follows that $f'$ can only violate constraints that are violated by $f^*$, and constraints that are not in $E_{{\cal S}\cup {\cal D}}(J)$.
Therefore the accuracy of $f'$ is at least 
$1 - \zeta - O(\zeta/\alpha') = 1-O(\zeta/\alpha')$.

By Lemma \ref{lem:isoperimetry_graph} we have that 
for all $U\subset J$, 
%\begin{align*}
$|E_{\cal S}(U, J\setminus U)| = \Omega(\alpha') |U| \cdot |J\setminus U|$.
%\end{align*}
Therefore, the combinatorial diameter (that is, the maximum number of edges in any shortest path) of $G_{\cal S}[J]$ is at most $O((\log n)/\alpha')$.
For all $\{x,y\}\in E_{\cal S}(J)$ we have $\rho(f'(x),f'(y))= \rho(f^*(x),f^*(y)) \leq u$, it follows that 
$\diam_M(f'(X)) = \diam_M(f'(J)) = O(\frac{u \log n}{\alpha'})$.
%, which concludes the proof.
\end{proof}

\subsection{The algorithm}
We are now ready to prove the main result of this Section.

\begin{proof}[Proof of Theorem \ref{thm:imperfect_info}]
Fix some optimal embedding $f^*:X\to \mathbb{R}^d$, with accuracy $1-\zeta$.
Let $E'\subset {\cal S} \cup {\cal D}$ be the set of constraints computed by the algorithm in Lemma \ref{lem:well-linked}.
We have 
%\[
$|E'| \leq \alpha |{\cal S}\cup {\cal D}|$,
%\]
for some $\alpha>0$ to be determined.

For each connected component $C$ of $G_{{\cal S}} \setminus E'$, let $\gamma_C=(C,{\cal S}_C,{\cal D}_C,u,\ell)$ be the restriction of $\gamma$ on $C$;
that is, ${\cal S}_C={\cal S}\cap \binom{C}{2}$, and ${\cal S}_D={\cal D}\cap \binom{C}{2}$.
Let $f^*_C$ be the restriction of $f^*$ on $C$, and let the accuracy of $f^*_C$ be $1-\zeta_C$, for some $\zeta_C\in [0,1]$.
Let $F_C$ be the set of constraints in $E_{\cal S}(C)$ that are violated by $f^*_C$.
%Let $J_C\subset C$ be the set $J$ given by Lemma \ref{lem:isoperimetry_graph} where $F=F_C$, and $\alpha' = \Omega(\frac{\alpha}{\log^{3/2} n \log\log n})$.
By Lemma \ref{lem:isoperimetry} it follows that there exists an embedding $f'_C:X\to \mathbb{R}^d$, with accuracy $1-O(\zeta_C/\alpha')$, and such that $\diam(f'_C(C)) = O(\frac{u\log n}{\alpha'})$, for some $\alpha'=\Omega(\frac{\alpha}{\log^{3/2} n \log\log n})$.

Using Lemma \ref{lem:pseudoregular_embedding_euclidean} we can compute a $(1+\eps')$-embedding $\widehat{f}_C:C\to \mathbb{R}^d$ with accuracy at least $1-O(\zeta_C/\alpha)-\eps$, in time $n^{O(1)} 2^{\eps^{-2} (\frac{\Delta \sqrt{d}}{\eps'})^{O(d)}}$.
We can combine all the embeddings $\widehat{f}_C$ into a single embedding $\widehat{f}:X\to \mathbb{R}^d$ by translating the images of $\widehat{f}_C(C)$ and $\widehat{f}_{C'}(C')$ in $\mathbb{R}^d$ to that their  distance is at least $\ell$, for all distinct components $C$, $C'$.
It is immediate that the resulting embedding $\widehat{f}$ can only violate the constraints in $E'$, and the constraints violated by all the $\widehat{f}_C$.
Thus, the accuracy of $\widehat{f}$ is at least $1-\alpha - O(\zeta/\alpha') - \eps$.
Setting $\alpha = \zeta^{1/2} \log^{3/4} n (\log\log n)^{1/2}$, we get that the accuracy of $\widehat{f}$ is at least $1-O(\zeta^{1/2} \log^{3/4} n (\log\log n)^{1/2}) - \eps$.
The running time is dominated by the at most $n$ executions of the algorithm from Lemma \ref{lem:pseudoregular_embedding_euclidean}, and thus it is at most $n^{O(1)} 2^{\eps^{-2} (\frac{d \log n}{\zeta \eps'})^{O(d)}}$, for any fixed $u>0$, concluding the proof.
\end{proof}

\section{Learning tree metric spaces with perfect information}
\label{sec:trees_perfect}

In this Section we present our algorithm for learning tree metric spaces, in the setting of perfect information.
We first show that if some instance to the problem admits an embedding into a tree of bounded diameter, then it also admits an embedding, with approximately optimal accuracy, and with low distortion, into a \emph{fixed} tree of bounded cardinality.
Using the algorithm from Section \ref{sec:pseudoregular}, we obtain an algorithm for computing such a near-optimal embedding.
The final algorithm is obtained by combining the above result with an efficient procedure for partitioning the input, such that each sub-instance admits a near-optimal solution into a tree of bounded diameter.

\subsection{Canonical tree embeddings and pseudoregular partitions}

We now define a family of ``canonical'' trees, and show that any instance that admits an optimal solution into a tree of bounded diameter, also admits a near-optimal solution into a canonical tree.

Let $\alpha > 0$, and $k,k'\in \mathbb{N}$.
We denote by $T_{\alpha, k, k'}$ the full $k'$-ary tree of combinatorial depth $k$ (that is, every non-leaf node has $k'$ children, and every root-to-leaf path contains $k$ edges), and with every edge having length $\alpha$.

\begin{lemma}[Existence of canonical tree embeddings]\label{lem:canonical_tree_existence}
Let $\eps, \eps'>0$.
Let $\gamma_C=(C, {\cal S}, {\cal D}, u, \ell)$ be an instance of the metric learning problem.
%Let $C\subseteq X$, and let $\gamma_C$ be the restriction of $\gamma$ on $C$.
Let $T^*$ be a tree of diameter at most $\Delta$, for some $\Delta>0$.
Suppose that $\gamma_C$ admits an embedding $g^*:C\to V(T^*)$ with accuracy $r^*$.
Then, there exists a $(1+\eps')$-embedding $\widebar{g}:C\to V(T_{\eps'/2, \lceil\Delta/\eps' \rceil, \lceil 1/\eps\rceil})$, with accuracy at least $r^*-\eps$.
\end{lemma}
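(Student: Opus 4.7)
The plan is to construct $\widebar g$ in two stages: first snap $g^*$ to a discretization of $T^*$ in which every depth is a multiple of $\eps'/2$, and then ``fold'' the snapped tree into the canonical tree via a random labeling that enforces the bounded arity.

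First I would root $T^*$ at a vertex center so that every vertex has depth at most $\Delta/2$ (subdividing the central edge by a Steiner point if the center is an edge). For each $v\in V(T^*)$ let $d_v=\rho_{T^*}(r^*,v)$ and set the snapped depth $\widehat d_v=(\eps'/2)\lfloor 2d_v/\eps'\rfloor$; monotonicity of the floor guarantees $\widehat d_{\father(v)}\le\widehat d_v$, so the snapped depths are consistent with the parent-child relations of $T^*$. Build $\widetilde T$ on $V(T^*)$ by keeping the same combinatorial tree structure, giving each edge $\{\father(v),v\}$ length $\widehat d_v-\widehat d_{\father(v)}$, contracting the $0$-length edges, and subdividing the remaining edges into length-$\eps'/2$ segments. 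Combinatorially $\widetilde T$ has depth at most $\lfloor \Delta/\eps'\rfloor\le\lceil \Delta/\eps'\rceil$, and since each endpoint moves by at most $\eps'/2$, pairwise distances change from those in $T^*$ by at most $\eps'$ additively. Because $u,\ell$ are fixed positive constants, this $\eps'$-additive error fits inside the multiplicative $(1+\eps')$-budget, so the induced map $\widetilde g:C\to V(\widetilde T)$ satisfies every similarity and dissimilarity constraint that $g^*$ satisfies.

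Next I would define $\widebar g$ by random labeling. Let $k'=\lceil 1/\eps\rceil$ and independently sample, for each non-root $v\in V(\widetilde T)$, a uniform label $\tau(v)\in[k']$. For $x\in C$, writing $a_1,\ldots,a_{L_x}=\widetilde g(x)$ for the ancestors of $\widetilde g(x)$ in $\widetilde T$ from depth $1$ down to depth $L_x$, set $\widebar g(x)=(\tau(a_1),\ldots,\tau(a_{L_x}))$, viewed as a vertex of depth $L_x\le\lceil\Delta/\eps'\rceil$ in $T_{\eps'/2,\lceil\Delta/\eps'\rceil,k'}$. Depths are preserved, so the canonical-tree distance is never larger than the $\widetilde T$-distance, and in particular every similarity constraint satisfied in $\widetilde T$ is deterministically satisfied in the canonical tree. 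For any pair $(x,y)\in C^2$ whose $\widetilde T$-LCA lies at combinatorial depth $M$, the canonical-tree LCA lies strictly deeper than $M$ iff the random labels of the two distinct children of the LCA on the paths to $\widetilde g(x)$ and $\widetilde g(y)$ happen to coincide --- an event of probability exactly $1/k'\le\eps$; in the complementary event the canonical distance equals the $\widetilde T$-distance exactly.

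Combining, every similarity constraint satisfied by $g^*$ is deterministically satisfied by $\widebar g$, while each dissimilarity constraint satisfied by $g^*$ fails under $\widebar g$ with probability at most $\eps$. By linearity of expectation the expected number of lost constraints is at most $\eps|{\cal D}|\le\eps(|{\cal S}|+|{\cal D}|)$, so the expected accuracy of $\widebar g$ is at least $r^*-\eps$, and the probabilistic method yields a labeling $\tau$ achieving this bound. The main subtlety is keeping the two $\eps'$-scale additive errors (one from snapping, one from at most a single level of label collision) simultaneously inside the multiplicative $(1+\eps')$-budget of the contrastive embedding; this is precisely where the hypothesis that $u,\ell$ are fixed positive constants is used, and if needed one could snap instead to multiples of $\eps'/C$ for a suitable absolute constant $C$ without changing any of the canonical tree's parameters by more than a constant factor.
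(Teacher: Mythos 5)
Your proof is correct and follows essentially the same two-stage approach as the paper: first snap the depths of $T^*$ to multiples of $\eps'/2$, then fold the snapped tree into the canonical $k'$-ary tree by independent uniform labels (your ``sequence of labels'' description is equivalent to the paper's inductively-defined random map $h$), and finish by linearity of expectation and the probabilistic method. One small improvement you make over the paper's write-up: you root at a center (introducing a Steiner point if needed) so that all depths are at most $\Delta/2$, which is what actually guarantees combinatorial depth $\lceil\Delta/\eps'\rceil$ after snapping; the paper roots at an arbitrary $v^*$, under which the snapped depth could be as large as $\lceil 2\Delta/\eps'\rceil$.
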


\begin{proof}
We begin by modifying $T^*$ so that the depth of each node is a multiple of $\eps'/2$, as follows.
Let $v^*\in V(T^*)$.
We consider $T^*$ as being rooted at $v^*$.
We consider each $v\in V(T^*)$ inductively, starting from $v^*$, and processing $v$ after its ancestor has been processed.
For each $v\in V(T^*)$, let $D_v = d_{T^*}(v^*, v)$.
If $D_v$ is not a multiple of $\eps'/2$, then let
$D'_v = (\eps'/2) \lfloor D_v / (\eps'/2) \rfloor$.
Let $u$ be the father of $v$ in the current tree.
Let $D_u'$ be the distance between $u$ and $v^*$ in the current tree.
By induction, we have $D'_u = (\eps'/2) \lfloor D_u / (\eps'/2) \rfloor$.
If $D'_u = D'_v$, then we contract the edge $\{u,v\}$, effectively identifying $v$ and $u$.
Otherwise, it must be $D'_u<D'_v$.
We set the length of $\{u,v\}$ to be $D'_v-D'_u$, and for every child $w$ of $v$, we set the length of $\{w,v\}$ to be $d_{T^*}(w,v) - (D_v-D'_v)$. 

Let $T'$ be the resulting tree after processing all nodes of $T^*$, and let $g':C\to V(T')$ be the induced embedding.
It is immediate by the above construction that every pairwise distance in $T^*$ changes by an additive factor of at most $\eps'$, and thus  $g'$ is a $(1+\eps')$-embedding with accuracy $r^*$.

Next, we modify $g'$ to obtain an embedding $g'':C\to T''$, where  $T''=T_{\eps'/2, \lceil\Delta/\eps' \rceil, \lceil 1/\eps\rceil}$.
We use the probabilistic method to show that the desired mapping $g''$ exists.
We construct a random mapping $h:V(T') \to V(T'')$ as follows.
We map the root of $T'$ to the root of $T''$.
For every other $v\in V(T')$, let $u$ be the parent of $v$ in $T'$.
Suppose that $h(u)$ has already been defined.
We set $h(v)$ to be some child $w$ of $u$ in $T''$, chosen uniformy at random from the set of $\lceil 1/\eps \rceil$ children of $u$.
This completes the construction of the random mapping $h:V(T') \to V(T'')$.
We set $g''= h \circ g'$, where $\circ$ denotes function composition.

It remains to show that $g''$ satisfies the assertion with positive probability.
Let $u,v\in C$.
If $g'(u)$ and $g'(v)$ lie in the same branch of $T'$, then we have, with probability 1, that 
\[
d_{T''}(g''(u), g''(v)) = d_{T'}(g'(u), g'(v)),
\]
and thus it suffices to consider $u,v\in C$ such that $g'(u)$ and $g'(v)$ lie in different branches of $T'$.
Let $a\in V(T')$ be the nearest common ancestor of $g'(u)$ and $g'(v)$ in $T'$.
Let $u', v'\in V(T')$ be the children of $a$ that are closest to $u$ and $v$ respectively (note that $u=u'$ if $u'$ is a child of $a$, and similarly for $v$).
We have
\begin{align*}
\Pr_h[d_{T''}(g''(u), g''(v)) = d_{T'}(g'(u), g'(v))] &\geq \Pr_h[h(u') \neq h(v')] = \frac{1}{\lceil 1/\eps\rceil} \leq \eps,
\end{align*}
where the equality follows by the fact that each of $h(u')$ and $h(v')$ is chosen uniformly at random from the set of children of $w$, which has cardinality $\lceil 1/\eps \rceil \geq 1/\eps$.
By the union bound on all pairs $\{u,v\}\in {\cal S}\cup {\cal D}$, we obtain that the expected number of constraints that are satisfied by $g'$ but violated by $g''$ is at most $\eps \cdot |{\cal S}\cup {\cal D}|$.
It follows that the expected accuracy of $g''$ is at least $r^* - \eps$.
This implies that there exists some desired $\widehat{g} = g'':C\to T''$, with accuracy at least $r^*-\eps$, which concludes the proof.
\end{proof}

Combining the above with the algorithm from Section \ref{sec:pseudoregular}, we obtain the following algorithm for learning tree metric spaces of bounded diameter.

\begin{lemma}[Computing canonical tree embeddings]\label{lem:canonical_tree_compute}
Let $\eps, \eps'>0$.
Let $\gamma_C=(C, {\cal S}, {\cal D}, u, \ell)$ be an instance of the metric learning problem.
Let $\alpha>0$, and $k,k'\in \mathbb{N}$.
Suppose that $\gamma_C$ admits a $(1+\eps')$-embedding $\widebar{g}:C\to V(T_{\alpha, k, k'})$ with accuracy $\widebar{r}$.
Then, there exists an algorithm which given $\gamma_C$ computes a $(1+\eps')$-embedding $\widehat{g}:C\to V(T_{\alpha,k,k'})$, with accuracy at least $\widebar{r}-\eps$.
Furthermore for any fixed $u$ and $\ell$, the running time is 
$n^{O(1)} 2^{O(k^{5k'}/\eps^2)}$.
%$T_2(n, d, \Delta, \eps, \eps') = n^{O(1)} 2^{O\left(\frac{\Delta^2 d^3}{\eps^2 \eps'}\right)}$.
\end{lemma}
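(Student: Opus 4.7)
The plan is to apply Lemma~\ref{lem:pseudo_embed} as a black box, taking the host metric space to be $M = (V(T_{\alpha,k,k'}), \rho_T)$, where $\rho_T$ is the shortest-path metric of the canonical tree $T_{\alpha,k,k'}$. The hypothesis of the present lemma supplies a $(1+\eps')$-embedding $\widebar{g}:C\to V(T_{\alpha,k,k'})$ of accuracy $\widebar{r}$, which is exactly the input required by Lemma~\ref{lem:pseudo_embed} with the role of $g^*$ played by $\widebar{g}$ and the role of $N$ played by $V(T_{\alpha,k,k'})$. The algorithm of Lemma~\ref{lem:pseudo_embed} will then return, in the stated running time, a $(1+\eps')$-embedding $\widehat{g}:C\to V(T_{\alpha,k,k'})$ of accuracy at least $\widebar{r}-\eps$, which is precisely the conclusion we want.

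The only quantitative step that needs to be carried out explicitly is an upper bound on $|N| = |V(T_{\alpha,k,k'})|$, in order to read off the running time from the $n^{O(1)} 2^{O(|N|^5/\eps^2)}$ bound of Lemma~\ref{lem:pseudo_embed}. Since $T_{\alpha,k,k'}$ is the full $k'$-ary tree of combinatorial depth $k$, a one-line geometric-series count gives $|V(T_{\alpha,k,k'})| = \sum_{i=0}^{k} (k')^i = O((k')^k)$, and plugging this in yields the claimed singly-exponential-in-$k$ running time (for fixed $u$ and $\ell$).

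There is essentially no obstacle here: the lemma is a direct instantiation of Lemma~\ref{lem:pseudo_embed} to the concrete finite metric space given by the vertex set of a canonical tree. All of the nontrivial work, namely the construction of the equitable pseudoregular partitions on $H^{\cal S}$ and $H^{\cal D}$, the exhaustive enumeration over the profiles $\tau':N\times {\cal V}\to \{0,\ldots,|C|/c_2\}$, and the counting argument bounding the change in the number of satisfied constraints, has already been performed inside the proof of Lemma~\ref{lem:pseudo_embed}. No new technique beyond identifying the appropriate host metric and computing its cardinality is needed, and the whole proof should therefore amount to a short paragraph that invokes Lemma~\ref{lem:pseudo_embed} and substitutes the bound on $|V(T_{\alpha,k,k'})|$.
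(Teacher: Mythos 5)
Your approach is exactly the paper's: instantiate Lemma~\ref{lem:pseudo_embed} with $N = V(T_{\alpha,k,k'})$ and read off the running time from the $n^{O(1)}2^{O(|N|^5/\eps^2)}$ bound. The proof is correct in substance.

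One thing worth flagging, though: your own (correct) count $|N| = \sum_{i=0}^{k}(k')^i = O((k')^k)$ gives a running time of $n^{O(1)}2^{O((k')^{5k}/\eps^2)}$, \emph{not} the stated $n^{O(1)}2^{O(k^{5k'}/\eps^2)}$ --- the roles of $k$ and $k'$ are swapped in the exponent. These are genuinely different quantities (exponential in the depth $k$ with base $k'$, versus polynomial in $k$ with exponent $5k'$), and your geometric-series count actually exposes what looks like a transposition in the lemma's stated bound (which is reproduced verbatim in the paper's one-line proof). Rather than asserting that ``plugging this in yields the claimed running time,'' you should either carry the corrected bound $2^{O((k')^{5k}/\eps^2)}$ forward or explicitly note the discrepancy, since when Lemma~\ref{lem:canonical_tree_compute} is invoked in the proofs of Theorems~\ref{thm:trees_perfect_info} and~\ref{thm:trees_imperfect_info} with $k = \lceil\Delta/\eps'\rceil$ and $k' = \lceil 8/\eps\rceil$, the two forms yield noticeably different final running-time bounds.
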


\begin{proof}
The result follows from Lemma \ref{lem:pseudo_embed}, by setting $M=(N,\rho)$ to be the shortest-path metric of $T_{\alpha, k, k'}$.
The running time is $n^{O(1)} 2^{O(|N|^5/\eps^2)} = n^{O(1)} 2^{O(|V(T_{\alpha,k,k'})|^5/\eps^2)} = n^{O(1)} 2^{O(k^{5k'}/\eps^2)}$.
\end{proof}

\subsection{The algorithm}

%The following is folklore (see also \cite{klein1993excluded} for a generalization of this result to arbitrary minor-free graphs).

%\begin{lemma}[\cite{klein1993excluded}]\label{lem:tree_decompose}
%For any $\Delta>0$, any tree metric space admits a $(O(1/\Delta), \Delta)$-Lipschitz distribution.
%\end{lemma}

%Let $(X, {\cal S}, {\cal D}, u, \ell)$ be an instance of the tree metric learning problem with perfect information.
%Fix some optimal embedding $f^*$ with accuracy $r^*$.
%Let $G_{\cal S}$ be the graph with vertex set $V(G_{\cal S})=X$, and edge set $E(G_{\cal S}) = {\cal S}$.
%We use $\rho_{\cal S}$ to denote the shortest-path distance of $G_{\cal S}$, where the length of each edge is set to $u$.

%\begin{lemma}\label{lem:trees_exact_diam}
%For any $\eps'>0$, 
%there exists some tree $T'$,
%and some $(1+\eps'/4)$-embedding $f':X\to V(T')$, with accuracy $r^*$, such that the following holds.
%Let $X'\subset X$, such that the subgraph of $G_{\cal S}$ induced on $X'$ (i.e.~$G_{\cal S}[X']$) is connected.
%Suppose further that $\diam_{T'}(f'(X'))\leq \Delta$, for some $\Delta>0$.
%Then, for any $p,q\in X'$, we have
%\[
%\rho_{\cal S}(p,q) \leq \min\{u,\ell\} \Delta/\eps'.
%\]
%\end{lemma}

%\begin{proof}
%\end{proof}

We are now ready to prove the main result of this Section.

\begin{proof}[Proof of Theorem \ref{thm:trees_perfect_info}]
Let $G_{\cal S}$ be the graph with $V(G_{\cal S})=X$ and $E(G_{\cal S}) = {\cal S}$.
Let $\rho_S$ denote the shortest-path metric of $G_{\cal S}$, where we consider each edge having length $u$.

The algorithm starts by computing a random partition ${\cal P}$ of $X$, as follows.
Pick an arbitrary $v^*\in X$.
Pick $\alpha\in [0,1)$, uniformly at random.
For each $i\in \mathbb{N}$, let
\[
X_i = \left\{x\in X: \Delta\cdot (i+\alpha)/2\leq \rho_{\cal S}(v^*, x) < \Delta \cdot (i+\alpha+1)/2 \right\},
\]
where $\Delta = 8u/\eps$.
For each $i$, let ${\cal C}_i$ be the set of connected components of $G_{\cal S}[X_i]$.
Let ${\cal C} = \bigcup_{i\in \mathbb{N}} {\cal C}_i$.
In other words, ${\cal C}$ is the collection of connected components in the partition of $G_{\cal S}$ obtained by deleting all edges with endpoints in different sets $X_i$, $X_j$.

Let ${\cal S}' \subseteq {\cal S}$ be the set of edges of $G_{\cal S}$ that are cut by the partition ${\cal C}$ (that is, they have endpoints in different clusters in ${\cal C}$).
It is immediate that for each $\{x,y\}\in {\cal S}$,
\begin{align*}
\Pr[\{x,y\} \text{ is cut by } {\cal C}] &\leq |\rho_S(v^*, x) - \rho_S(v^*, y)| / (\Delta/2) \leq u/(\Delta/2) \leq \eps/4
\end{align*}
Therefore, by the linearity of expectation, the expected number of edges in ${\cal S}$ that are cut by ${\cal C}$ is at most $|{\cal S}| \eps /4$.

Consider $T^*$ as being rooted at $r^*=f^*(v^*)$.
We next show that that for any $x,y\in X$, which lie on the same branch of $T^*$, we have that 
\begin{align}
\rho_{\cal S}(v^*, x) < \rho_{\cal S}(v^*, y) \Rightarrow d_{T^*}(r^*, f^*(x)) < d_{T^*}(r^*, f^*(y)). \label{eq:T_induction0}
\end{align}
For any integer $i\geq 0$, let 
\[
V_i = \{x\in X : \rho_{\cal S}(v^*, x) = u \cdot i\}.
\]
Note that $V_0=\{v^*\}$.
Assume, for the sake of contradiction, that \eqref{eq:T_induction0} does not hold.
That is, there exists some branch $P$ of $T^*$, and some $x,y\in X$, with $f^*(x),f^*(y)\in P$, such that $\rho_{\cal S}(v^*, x) < \rho_{\cal S}(v^*, y)$, and $d_{T^*}(r^*, f^*(x)) > d_{T^*}(r^*, f^*(y))$.
It follows that $x\neq v^*$.
Let $i_1,i_2\geq 0$, such that $x\in V_{i_1}$ and $y\in V_{i_2}$.
Since $\rho_{\cal S}(v^*, x) < \rho_{\cal S}(v^*, y)$, we have $i_1<i_2$.
Since $x\neq v^*$, we have $i_1>0$.
Let $x'\in X$ be the neighbor of $x$ along a shortest $x$-$v^*$ path in $G_{\cal S}$.
By picking $x$ so that $d_{T^*}(f^*(x),r^*)$ is minimized, it follows that $f^*(y)$ lies on the path between $f^*(x)$ and $f^*(x')$ in $T^*$.
Since $\{x,x'\}\in {\cal S}$, and $f^*$ has accuracy $1$, it follows that 
\begin{align}
d_{T^*}(f^*(x), f^*(y)) &\leq u, \label{eq:T_induction1}
\end{align}
and
\begin{align}
d_{T^*}(f^*(x'), f^*(y)) &\leq u, \label{eq:T_induction2}
\end{align}
Using again the fact that $f^*$ has accuracy $1$, we get by \eqref{eq:T_induction1} and \eqref{eq:T_induction2} that $\{x,y\}\in {\cal S}$, and $\{x',y\}\in {\cal S}$.
This implies that $y\in V_{i_1}$, and thus $i_1=i_2$, a contradiction.
We have thus established \eqref{eq:T_induction0}.

It follows by \eqref{eq:T_induction0} that for any branch $P$ of $T^*$, when restricting ${\cal C}$ on $(f^*)^{-1}(V(P))$, we obtain subsets of $X$ that are mapped to disjoint subpaths of $P$.
Moreover, since $f^*$ has accuracy 1, we have that for all $\{x,y\}\in {\cal S}$, $d_{T^*}(f^*(x), f^*(y)) \leq u$, and thus each such subpath of $P$ has length at most $\Delta/2$.
Since the shortest path between any two vertices of any tree consists of at most two subpaths of distinct branches, it follows that for any $C\in {\cal C}$, 
\[
\diam_{T^*}(f^*(C)) \leq \Delta.
\]
In other words, by restricting $f^*$ on $C$, we obtain an embedding $f^*_C$ of $C$ with accuracy $1$, into some tree $T^*_C$ of diameter at most $\Delta$.
By Lemma \ref{lem:canonical_tree_existence} 
there exists some $(1+\eps')$-embedding $\widebar{f}_C$ of $C$ into  $T_{\eps'/2, \lceil \Delta/\eps'\rceil, \lceil 8/\eps\rceil}$, with accuracy at least $1-\eps/8$.
Therefore by Lemma \ref{lem:canonical_tree_compute} we can compute an embedding $\widehat{f}_C$ of $C$ into some tree $\widehat{T}_C$, with accuracy at least $1-\eps/8-\eps/8=1-\eps/4$, in time $n^{O(1)} 2^{O((\lceil \Delta/\eps'\rceil)^{5(\lceil 2/\eps\rceil)}/\eps^2)}$, with high probability.
For any fixed $u>0$, this running time is $n^{O(1)} 2^{1/(\eps\eps')^{O(1/\eps)}}$.

Finally, we can merge all the trees $\widehat{T}_C$, for all $C\in {\cal C}$, into a single tree $\widehat{T}$, as follows.
We introduce a new vertex $\widehat{r}$, and we connect $\widehat{r}$ to an arbitrary vertex $u_C$ in each $C$, via an edge $\{\widehat{r}, u_C\}$ of length $2\ell$.
We set $\widehat{f}$ to be the resulting embedding of $X$ into $\widehat{T}$.
The total running time for computing $\widehat{T}$ is dominated by the computation of the embeddings $\widehat{f}_C$, for all $C\in {\cal C}$.
Since there are at most $n$ clusters in ${\cal C}$, it follows that the total running time is $n^{O(1)} 2^{1/(\eps\eps')^{O(1/\eps)}}$, as required.

By the choice of the lenght of the new edges added to $\widehat{T}$, all edges in ${\cal D}$ that are cut by ${\cal C}$ are satisfied in the resulting embedding $\widehat{f}$.
It follows that $\widehat{f}$ has expected accuracy at least $1-\eps/2$.
By repeating the algorithm $O(\log n)$ times and returning the best solution found, we obtain an embedding with accuracy at least $1-\eps$, with high probability, which concludes the proof.
\end{proof}

\section{Learning tree metric spaces with imperfect information}
\label{sec:trees_imperfect}

This Section presents our algorithm for learning tree metric spaces under imperfect information.
The scructure of the algorithm and its proof of correctness resemble the arguments from the Euclidean case.

\begin{proof}[Proof of Theorem \ref{thm:trees_imperfect_info}]
Fix some optimal embedding $f^*:X\to V(T^*)$, for some tree $T^*$, with accuracy $1-\zeta$.
Let $E'\subset {\cal S} \cup {\cal D}$ be the set of constraints computed by the algorithm in Lemma \ref{lem:well-linked}.
We have 
$|E'| \leq \alpha |{\cal S}\cup {\cal D}|$,
for some $\alpha>0$ to be determined.

For each connected component $C$ of $G_{{\cal S}} \setminus E'$, let $\gamma_C=(C,{\cal S}_C,{\cal D}_C,u,\ell)$ be the restriction of $\gamma$ on $C$.
Let $f^*_C:C\to V(T^*)$ be the restriction of $f^*$ on $C$, and let the accuracy of $f^*_C$ be $1-\zeta_C$, for some $\zeta_C\in [0,1]$.
Let $F_C$ be the set of constraints in $E_{\cal S}(C)$ that are violated by $f^*_C$.
%Let $J_C\subset C$ be the set $J$ given by Lemma \ref{lem:isoperimetry_graph} where $F=F_C$, and $\alpha' = \Omega(\frac{\alpha}{\log^{3/2} n \log\log n})$.
By Lemma \ref{lem:isoperimetry} it follows that there exists an embedding $f'_C:X\to V(T^*)$, with accuracy $1-O(\zeta_C/\alpha')$, and such that $\diam(f'_C(C)) \leq \Delta := O(\frac{u\log n}{\alpha'})$, for some $\alpha'=\Omega(\frac{\alpha}{\log^{3/2} n \log\log n})$.

Arguing as in the proof of Theorem \ref{thm:trees_perfect_info},
by Lemma \ref{lem:canonical_tree_existence} 
there exists some $(1+\eps')$-embedding $\widebar{f}_C$ of $C$ into  $T_{\eps'/2, \lceil \Delta/\eps'\rceil, \lceil 8/\eps\rceil}$, with accuracy at least $1-O(\zeta_C/\alpha')-\eps/8$.
Therefore by Lemma \ref{lem:canonical_tree_compute} we can compute an embedding $\widehat{f}_C$ of $C$ into some tree $\widehat{T}_C$, with accuracy at least $1-O(\zeta_C/\alpha')-\eps/8-\eps/8=1-O(\zeta_C/\alpha')-\eps/4$, in time $n^{O(1)} 2^{O((\lceil \Delta/\eps'\rceil)^{5(\lceil 2/\eps\rceil)}/\eps^2)}$, with high probability.
%For any fixed $u>0$, this running time is $n^{O(1)} 2^{1/(\eps\eps')^{O(1/\eps)}}$.
We can merge all the trees $\widehat{T}_C$, for all $C\in {\cal C}$, into a single tree $\widehat{T}$, as follows.
We introduce a new vertex $\widehat{r}$, and we connect $\widehat{r}$ to an arbitrary vertex $u_C$ in each $C$, via an edge $\{\widehat{r}, u_C\}$ of length $2\ell$.
Let $\widehat{f}:\to V(\widehat{T})$ be the resulting embedding.

By the choice of the lenght of the new edges added to $\widehat{T}$, all edges in ${\cal D}$ that are cut by ${\cal C}$ are satisfied in the resulting embedding $\widehat{f}$.
It follows that $\widehat{f}$ has expected accuracy at least $1-\alpha-O(\zeta/\alpha')-\eps/2$.
By repeating the algorithm $O(\log n)$ times and returning the best solution found, we obtain an embedding with accuracy at least $1-\alpha-O(\zeta/\alpha')-\eps$, with high probability.
Setting $\alpha = \zeta^{1/2} \log^{3/4} n (\log\log n)^{1/2}$, we get that the accuracy of $\widehat{f}$ is at least $1-O(\zeta^{1/2} \log^{3/4} n (\log\log n)^{1/2}) - \eps$, as required.
The total running time is dominated by the computation of the embeddings $\widehat{f}_C$, for all $C\in {\cal C}$.
Since there are at most $n$ clusters in ${\cal C}$, it follows that the total running time is $n^{O(1)} 2^{O((\lceil \Delta/\eps'\rceil)^{5(\lceil 2/\eps\rceil)}/\eps^2)}$, which for any fixed $\zeta$ and $u$, is at most $2^{\left(\frac{u \log n}{\zeta^{1/2}\eps \eps'}\right)^{O(1/\eps)}}$, as required.
\end{proof}

\section{Learning line metric spaces with perfect information}
\label{sec:line_perfect}

This Section is devoted to obtaining an algorithm for learning a line metric space with perfect information, culminating to the proof of Theorem \ref{thm:perfect_info_line}.
For the remainder of this Section, 
let $\gamma=(X, {\cal S}, {\cal D}, u, \ell)$ be an instance of the problem of learning a line metric space, with $|X|=n$.
Let $G_{\cal S}$ be the graph with $V(G_{\cal S})=X$ and $E(G_{\cal S}) = {\cal S}$.
We may assume w.l.o.g.~that $G_{\cal S}$ is connected, since otherwise we can solve the problem on each connected component independently, and concatenate the resulting embeddings by leaving a gap of length $\ell$ between the images of different components.
%We assume that there exists some optimal embedding $f^*:X\to \mathbb{R}$ with accuracy 1.

For a mapping $f:X\to \mathbb{R}$ and some ordering $\sigma=x_1,\ldots,x_n$ of $X$, we say that $f$ is \emph{compatible} with $\sigma$ if for all $x_i,x_j\in X$, if $f(x_i)<f(x_j)$, then $i<j$.
For some $m\in [n]$, and define the prefix $\sigma[m]=x_1,\ldots,x_m$.
We say that $f$ is compatible with $\sigma[m]$ if the above condition holds for all $i,j\in [m]$, and for all $k\in \{m+1,\ldots,n\}$, we have $f(x_m)\leq f(x_k)$.

\begin{lemma}[Computing a set of permutations]\label{lem:line_permutation}
Suppose that $G_{\cal S}$ is connected.
Then there exists a polynomial-time algorithm which given $\gamma$, computes a set of total orderings $\sigma_1,\ldots,\sigma_n$ of $X$, satisfying the following condition:
If $\gamma$ admits an embedding with accuracy $1$, then there exists some $i\in [n]$, such that $\gamma$ admits an embedding with accuracy $1$ that is compatible with $\sigma_i$.
\end{lemma}

\begin{proof}
We may assume w.l.o.g.~that $|X|>3$, since otherwise we can simply output all possible total orderings of $X$.
Fix some embedding $f$ with accuracy $1$.
We compute a set $\sigma_1,\ldots,\sigma_n$ of total orderings of $X$, where each $\sigma_i$ starts with a distinct element in $X$.
It therefore suffices to describe the construction of $\sigma_i=x_{i,1},\ldots,x_{i,n}$, for some $i\in [n]$, assuming that $x_{i,1}$ is fixed.
The algorithm computes a total ordering $x_{i,1},\ldots,x_{i,n}$ inductively.
Since we try all possible starting elements, it follows that for some $i$, $f$ is compatible with $\sigma_i[1]$.
For this value of $i$, we will maintain the inductive invariant that $f$ is compatible with $\sigma_i[j]$, for all $j$.
Suppose that $x_{i,1},\ldots,x_{i,j}$ have already been computed for some $j\in [n-1]$.
By the inductive invariant, we have that $f$ is compatible with $\sigma_i[j]$ (for some $i$).
Let $G'$ be the graph obtained from $G_{\cal S}$ by contracting $\{x_{i,1},\ldots,x_{i,j}\}$ into a single vertex $x'$.
Since $f$ is compatible with $\sigma_i[j]$, it follows that there exists an embedding with accuracy $1$ for $G'$ (that is, with the set of similar pairs being equal to the set of edges of $G'$), with $x'$ being the left-most vertex; such an embedding can be obtained by restricting $f$ on $X\setminus \{x_{i,1},\ldots,x_{i,j-1}\}$.
Thus the neighborhood of $x'$ in $G'$ is a clique $C'$.
Let $C''\subset C'$ be the set of all vertices in $C'$ that have minimum degree in $G'$.
It follows that $C''$ forms an orbit in the group of automorphisms of $G'$ (since any two vertices in $C''$ must have the same neighborhood by the fact that $f$ has accuracy $1$).
Therefore we may assume w.l.o.g.~that the vertex that is embedded immediately to the right of $x'$ is any of the vertices in $C''$.
We set $x_{i,j+1}$ to be an arbitrary vertex in $C''$.
This concludes the construction.
By the inductive hypothesis, there exists some $i\in \{1,\ldots,n\}$, such that $f$ is compatible with $\sigma_i[n]=\sigma_i$, which concludes the proof.
\end{proof}

\begin{lemma}[From a permutation to an embedding]\label{lem:line_embedding}
There exists a polynomial-time algorithm which given $\gamma$, and a total ordering $\sigma=x_1,\ldots,x_n$ of $X$, terminates with one of the following outcomes:
\begin{description}
\item{(1)}
Computes some embedding $f:X\to \mathbb{R}$ with accuracy $1$.
\item{(2)}
Correctly decides that there exists no embedding $f:X\to \mathbb{R}$, with accuracy $1$, that is compatible with $\sigma$.
\end{description}
\end{lemma}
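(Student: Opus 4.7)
The plan is to reduce the problem to a system of difference constraints, which can be solved in polynomial time by a standard shortest-path computation. Fix the given ordering $\sigma=x_1,\ldots,x_n$, and look for an embedding $f$ that is compatible with $\sigma$. Introduce variables $s_i = f(x_i)$ for each $i\in[n]$. Compatibility with $\sigma$ is equivalent to the $n-1$ inequalities $s_i \leq s_{i+1}$. For each pair $\{x_i,x_j\}\in {\cal S}\cup {\cal D}$ with $i<j$, the compatibility assumption implies $|f(x_j)-f(x_i)| = s_j - s_i$, so the constraint $\{x_i,x_j\}\in {\cal S}$ translates to $s_j - s_i \leq u$, while the constraint $\{x_i,x_j\}\in {\cal D}$ translates to $s_j - s_i \geq \ell$, i.e.~$s_i - s_j \leq -\ell$. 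All of these are difference constraints of the form $s_a - s_b \leq c$.

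To solve this system, I would build the standard constraint digraph on vertex set $\{s_0, s_1,\ldots,s_n\}$, where $s_0$ is an auxiliary source. For each similarity pair $\{x_i,x_j\}\in {\cal S}$ with $i<j$, add an arc $s_i\to s_j$ of length $u$. For each dissimilarity pair $\{x_i,x_j\}\in {\cal D}$ with $i<j$, add an arc $s_j\to s_i$ of length $-\ell$. For each monotonicity constraint $s_i - s_{i+1}\leq 0$ add an arc $s_{i+1}\to s_i$ of length $0$. Finally, add zero-length arcs from $s_0$ to every other vertex. Run Bellman--Ford from $s_0$. If a negative cycle is detected, the system is infeasible and we output outcome (2), correctly certifying that no embedding with accuracy $1$ compatible with $\sigma$ exists. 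Otherwise, let $s_i^*$ be the computed shortest-path distance from $s_0$ to vertex $s_i$; by the standard theory of difference constraints, the assignment $f(x_i) := s_i^*$ satisfies all constraints simultaneously, and we output outcome (1).

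Correctness of the two outcomes follows from the standard equivalence between feasibility of a system of difference constraints and the absence of a negative cycle in the constraint digraph. The accuracy of the returned $f$ is exactly $1$ because every constraint in ${\cal S}\cup {\cal D}$ was encoded as one of the edges, and the compatibility with $\sigma$ is enforced by the monotonicity arcs. The running time is polynomial: the digraph has $n+1$ vertices and $O(n^2)$ arcs, so Bellman--Ford runs in $O(n^3)$ time, which dominates the construction.

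Since the reduction to difference constraints is immediate once one observes that compatibility with $\sigma$ removes the absolute values from the constraints, there is essentially no hard step here; the only thing to be careful about is to correctly report outcome (2) when Bellman--Ford detects a negative cycle, and to verify that the shortest-path solution indeed satisfies every original constraint (not just the slackness conditions), which is immediate from the definitions.
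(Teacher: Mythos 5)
Your proof is correct and takes essentially the same route as the paper: both reduce the question to feasibility of a system of linear inequalities over the real line, which is solvable in polynomial time. The paper phrases it as an LP in ``gap'' variables $\alpha_i$ (the lengths between consecutive points in the ordering), with constraints $\sum_{t=i}^{j-1}\alpha_t \leq u$ for similar pairs and $\sum_{t=i}^{j-1}\alpha_t \geq \ell$ for dissimilar pairs; you phrase it in position variables $s_i = f(x_i)$ as a system of difference constraints and solve it by Bellman--Ford on the constraint digraph. These formulations are equivalent, and using Bellman--Ford is simply a more elementary, self-contained subroutine than invoking a generic polynomial-time LP solver.

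One genuine difference worth noting: you explicitly include the monotonicity constraints $s_i \leq s_{i+1}$ (equivalently $\alpha_i \geq 0$). The paper's LP as written declares $\alpha_i \in \mathbb{R}$ with no sign constraint, and without $\alpha_i \geq 0$ the direction ``LP feasible $\Rightarrow$ the constructed $f$ has accuracy $1$'' is not actually valid: a feasible solution with a sufficiently negative $\alpha_i$ yields a non-monotone $f$ for which $|f(x_i)-f(x_j)|$ need not equal $\sum_{t=i}^{j-1}\alpha_t$, so a similarity constraint with a small positive bound $u$ can end up violated even though its LP inequality holds. Your explicit monotonicity arcs close this gap and make the equivalence between feasibility and existence of a compatible accuracy-$1$ embedding exact, so your write-up is in fact slightly more careful than the paper's.
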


\begin{proof}
We argue that the problem of deciding whether there exists an embedding $f$ with accuracy $1$ that is compatible with $\sigma$, can be reduced to the feasibility of the following linear program:
\begin{align*}
\begin{array}{ll}
  \alpha_i \in \mathbb{R} & \text{ for all } i \in [n-1]\\
  \sum_{t=i}^{j-1} \alpha_t \leq u & \text{ for all } \{x_i,x_j\}\in {\cal S}, i<j\\
  \sum_{t=i}^{j-1} \alpha_t \geq \ell & \text{ for all } \{x_i,x_j\}\in {\cal D}, i<j
\end{array}
\end{align*}

If there exists an embedding $f:X\to \mathbb{R}$ with accuracy $1$, that is compatible with $\sigma$, then we can find a feasible solution for the LP by setting $\alpha_i=f(x_{i+1})-f(x_i)$.
It follows that for all $\{x_i,x_j\} \in {\cal D}\cup {\cal S}$, with $i<j$, we have (by the compatibility of $f$ and $\sigma$) that 
\[
|f(x_i)-f(x_j)|=\sum_{t=i}^{j-1} \alpha_t, 
\]
and thus this assignment forms a feasible solution for the LP.

It remains to show that if the LP is feasible, then there exists an embedding $f:X\to \mathbb{R}$ with accuracy $1$.
Such an embedding can be found by setting $f(x_1)=0$, and for all $i>1$, $f(x_i)=\sum_{t=1}^{i-1} \alpha_t$, for some feasible solution $\{a_i\}_i$.
The feasibility of the LP immediately implies that $f$ has accuracy $1$.
\end{proof}

\begin{proof}[Proof of Theorem \ref{thm:perfect_info_line}]
Let $G_{\cal S}$ be the graph defined above.
Let ${\cal C}$ be the set of connected components of $G_{\cal S}$.
For each $C\in {\cal C}$, let $\gamma_C$ be the subproblem obtained by restricting $\gamma$ on induced on $C$; that is 
\[
\gamma_C = \left(X\cap C, {\cal S}\cap \binom{C}{2}, {\cal D}\cap \binom{C}{2}, u, \ell\right).
\]
We can obtain a solution for $\gamma$ by solving each subproblem $\gamma_C$ independently and combining the solutions by placing the images of embedding for different components at distance at least $\ell$ from each other.

It therefore remains to obtain an algorithm for the case where $G_{\cal S}$ is connected.
This can be done by first computing, using Lemma \ref{lem:line_permutation}, a set of total orderings, $\sigma_1,\ldots,\sigma_n$.
For each such ordering $\sigma_i$, we run the algorithm of \ref{lem:line_embedding}.
If the algorithm returns some embedding $f_i$, then we output $f_i$; if no execution of the algorithm outputs an embedding, then we can correctly decide that no embedding with accuracy $1$ exists, which concludes the proof.
\end{proof}

\section{NP-hardness of learning the plane}
\label{sec:np-hard}

In this Section we show that the problem of learning a 2-dimensional Euclidean metric space is NP-hard, even in the case of perfect information.

\begin{proof}[Proof of Theorem \ref{thm:np-hard}]
A graph $G$ is called \emph{unit disk} if it can be realized as the intersection graph of a set of unit disks in the plane.
The problem of deciding whether a given graph is unit disk is NP-hard \cite{breu1998unit}.
Furthermore, the reduction in \cite{breu1998unit} shows that it is NP-hard, given some graph $G$, to decide whether there exists some $f:V(G)\to \mathbb{R}^2$, such that for all $\{u,v\}\in E(G)$, $\|f(u)-f(v)\|_2\leq 1$, and for all $\{u,v\}\notin E(G)$, $\|f(u)-f(v)\|_2\geq 1+\alpha$, for some fixed $\alpha>0$.
It is easy to see that this problem is a special case of the non-linear metric learning problem, by setting $G=(X,{\cal S})$, $d=2$, $u=1$, and $\ell=1+\alpha$, which concludes the proof.
\end{proof}

\bibliographystyle{alpha}
\bibliography{bibfile}

\end{document}